\def\N{{\rm I\kern-.25em N}}
\def\R{{\rm I\kern-.25em R}}
\def\Q{{\rm I\kern-.25em Q}}
\def\ol{\overline}
\def\Tr{\text{Tr}}
   \def\L{\Lambda}             
\def\p{\pi}
\newtheorem{thm}{Theorem}[section]
\newtheorem{lem}[thm]{Lemma}
\newtheorem{prop}[thm]{Proposition}
 {
}
\newtheorem{cor}[thm]{Corollary}
\newtheorem{define}[thm]{Definition}
\newtheorem{rem}[thm]{Remark}
\newtheorem{ex}[thm]{Example}
\newenvironment{proof}{\bigskip\par\noindent{\it Proof:}}{$\square$\newline\vspace*{0.2cm}}
\makeatletter \@addtoreset{equation}{section}
\begin{document}

\title{Quantum Markov chains associated with open quantum random walks }
\author{}
\date{}
\maketitle

\date{}
\maketitle
 \baselineskip=18pt
\vspace{-2cm}

\begin{center}
%\begin{minipage}{5in}
\parbox{5in}

\noindent{Ameur Dhahri}

\noindent {\small Department of Mathematics, Chungbuk National
University,\\ Chungdae-ro, Seowon-gu, Cheongju,
Chungbuk 28644, Korea\\[-0.1cm]
e-mail: ameur@chungbuk.ac.kr} \vskip 0.5 true cm

\noindent{ Chul Ki Ko}

\noindent {\small University College, Yonsei University,\\ 85 Songdogwahak-ro, Yeonsu-gu, Incheon 21983, Korea\\[-0.1cm]
e-mail: kochulki@yonsei.ac.kr} \vskip 0.5 true cm

 \noindent {Hyun Jae Yoo}\footnote{Corresponding author}

\noindent {\small Department of Applied Mathematics, Hankyong National University,\\
327 Jungang-ro, Anseong-si, Gyeonggi-do 17579, Korea\\[-0.1cm]
e-mail: yoohj@hknu.ac.kr} \vskip 0.2 true cm
\end{center}

\begin{abstract}
{In this paper we construct (nonhomogeneous) quantum Markov chains associated with open quantum random walks. The quantum Markov chain, like the classical Markov chain, is a fundamental tool for the investigation of the basic properties such as reducibility/irreducibility, recurrence/transience, accessibility, ergodicity, etc, of the underlying dynamics. Here we focus on the discussion of the reducibility and irreducibility of open quantum random walks via the corresponding quantum Markov chains. Particularly we show that the concept of reducibility/irreducibility of open quantum random walks in this approach is equivalent to the one previously done by Carbone and Pautrat. We provide with some examples. We will see also that the classical Markov chains can be reconstructed as quantum Markov
chains. }
\end{abstract}

\vspace{0.2cm} {\bf Key words}: Open quantum random walks, quantum Markov chain, transition expectation, reducibility, irreducibility, classical Markov chain.

\vspace{0.2cm} {\bf Mathematical subject classification (2000)}: 60J10, 46L55, 37A30, 82C10, 82C41.

\setlength{\hoffset}{-0.2in}

\baselineskip=18pt
\section{Introduction}
The purpose of this paper is to construct the quantum Markov chains (QMCs hereafter) associated with open quantum random walks (OQRWs) and investigate some interesting properties. Here we focus on the reducibility and irreducibility of QMCs for OQRWs.

The OQRWs were introduced by Attal, {\it et al.}  in \cite{AG-PS, APSS, APS}
 to model the quantum random walks. In particular, the OQRWs were developed
 to formulate the dissipative quantum computing algorithms and dissipative quantum state preparation.
 In that paper the authors introduced the concept of quantum trajectories. This is a repeated process of
 completely positive mapping on a state (an evolution of OQRW, see the next section for the detail) and
  a measurement of the position. By this they constructed a (classical) Markov chain.
  Using this Markov chain, Attal {\it et al.} established a central limit theorem
 for the asymptotic behavior of the OQRWs \cite{AG-PS}.

Recently the dynamical behavior of OQRWs drew many interests and
some works have been done for the ergodicity, hitting times,
recurrence, reducibility, etc, of OQRWs \cite{CP, DM, L, LS}. In
\cite{DM}, Dhahri and Mukhamedov constructed the QMCs for the OQRWs and investigated recurrence and accessibility
of the QMC. On the other hand the QMC was introduced by Accardi
\cite{Ac1, Ac2, Ac3} and further developed \cite{AK1,
AK2}, and has found several applications. See e.g.,  \cite{AF1, AF2,
AMS1, AMS2, AMS3} and references therein. The main ingredient for
the QMC is the transition expectation, which is a completely
positive map and it is a quantum version of the transition matrix for the classical Markov chains \cite{AK1, AK2}. See Section 3 for the
details. Accardi and Koroliuk, after defining the QMC, developed
the quantum versions of reducibility and irreducibility,
accessibility, recurrence and transience \cite{AK1, AK2}. In this
paper we adopt the construction of QMCs for OQRWs done in \cite{DM}
with some modifications. A remarkable point in our construction is that  we have introduced the sub-Markovian transition expectations, contrasting to the fact that it is generally required to have Markovianity for the transition expectations. The Markovianity is recovered when we talk together with the initial conditions and the transition expectations. It seems that this approach is necessary when we try to recover the original dynamics. Another typical notice in our construction is that we have considered the nonhomogeneous quantum Markov
chains instead of homogeneous ones. This is also necessary to recover the original dynamics unless we start with an initial state which is invariant under the dynamics. After constructing the QMCs associated with OQRWs, we study
the reducibility and irreducibility of the OQRWs in the language of the constructed QMCs. We give some sufficient conditions for reducibility/irreducibility providing with some examples. We separately show that the classical Markov chains are reconstructed by the quantum Markov chains and the classical reducibility/irreducibility can be studied by the language of QMCs.  

Let us briefly overview the contents of this paper. In Section 2,
we recall the definition of OQRWs as defined in \cite{APSS}.
Section 3 summarizes the construction of QMCs. Section 4 is the main part of this paper.
We construct the nonhomogeneous QMCs associated with OQRWs using (sub-Markovian) transition expectations. We then develop a characterization for the reducibility/irreducibility (Theorem \ref{thm4.8}) and give some sufficient conditions for reducibility (Theorem \ref{thm4.9}) and irreducibility (Theorem \ref{thm:irreducibility}). Section 5 is devoted to the examples. We construct some examples of reducible and irreducible OQRWs in 1-dimensional integer lattice. We also investigate the relation with classical Markov chains. In Subsection \ref{subsec:CMC} we construct a QMC for a given classical Markov chain. We show that our construction is natural in the sense that it realizes the original classical Markov chain. We then compare the reducibility and irreducibility properties viewed in quantum and classical Markov chains. Finally, in the Appendix we compare with the previous results on the reducibility/irreducibility for OQRWs studied by Carbone and Pautrat \cite{CP}. In fact, it turns out that the concepts of reducibility/irreducibility of OQRWs given in \cite{CP} and in the present paper are equivalent.  

\section{Open quantum random walks}

In this section we briefly introduce the open quantum random walks.

Let $\mathcal{K}$ be a separable Hilbert space with an orthonormal
basis $\{|i\rangle\}_{i\in\Lambda}$ indexed by the vertices of
some graph $\Lambda$. Here the set $\Lambda$ of vertices may be
finite or countably infinite. Let $\mathcal{H}$ be another
separable Hilbert space, which will describe the degrees of
freedom given at each point of $\Lambda$. We consider the space
$\mathcal{H}\otimes\mathcal{K}$.

For each pair $i, j\in\Lambda$ we give a bounded linear operator
$B^i_j$ on $\mathcal{H}$. This operator stands for the effect of
passing from $j$ to $i$. We assume that for each $j$
\begin{equation}
\sum_i{B^i_j}^*B^i_j=I,\label{2.1}
 \end{equation}
 where the series is strongly convergent to the identity operator $I$. This constraint means that the sum of all
  the effects leaving site $j$ is $I$. We dilate the operators $B^i_j$ on $\mathcal{H}$
  as operators on $\mathcal{H}\otimes\mathcal{K}$ by defining
 $$M^i_j=B^i_j\otimes|i\rangle\langle j|.$$
 The operators $M^i_j$ encodes exactly the idea that while passing from $j$ to
 $i$ on the space, the effect is the operator $B^i_j$ on $\mathcal{H}$. By \eqref{2.1}, it is easy to see that
 \begin{equation}
\sum_{i,j}{M^i_j}^*M^i_j=I.\label{2.2}
 \end{equation}
Using the operators $\{M^i_j\}_{i,j}$, define a completely
positive map on ${\mathcal I}_1(\mathcal{H}\otimes\mathcal{K})$,
the ideal of trace class operators, by:
\begin{equation}
\mathcal{M}(\rho)=\sum_i\sum_jM^i_j\rho{M^i_j}^*. \label{2.3}
 \end{equation}
We consider density matrices on $\mathcal{H}\otimes\mathcal{K}$ of the particular form
 $$\rho=\sum_i\rho_i\otimes|i\rangle\langle i|,$$
 where for each $i\in\Lambda$, $\rho_i$ is a positive definite trace class operator
 and satisfies $\sum_i\text{Tr}(\rho_i)=1$. For a given initial state of such form, the OQRW
 is defined by the completely positive map $\mathcal{M}$:
 \begin{equation}
\mathcal{M}(\rho)=\sum_i\left(\sum_jB^i_j\rho_j{B^i_j}^*\right)\otimes|i\rangle\langle i|. \label{2.4}
 \end{equation}
 Hence a measurement of the position in $\mathcal{K}$ would give a probability
 $\sum_j\text{Tr}(B^i_j\rho_j{B^i_j}^*)$  to find out the particle at site $i$.
 The OQRW is a repeated operation of the completely positive map $\mathcal M$.
The two-step evolution, for instance, is of the form
$$\mathcal{M}^2(\rho)=\sum_i\sum_j\sum_kB^i_jB^j_k\rho_k{B^j_k}^*{B^i_j}^*\otimes|i\rangle\langle i|.$$

\section{Quantum  Markov  chains}

In this section we briefly recall the definitions of quantum
Markov chains \cite{AK1, AK2, DM, Lu} and (ir)reducibility
\cite{AK1, AK2}.

\subsection{Quantum Markov chains}

Let $\mathbb{Z}_+$ be the set of all nonnegative integers. Let
$\mathcal{B}$ be a von Neumann subalgebra of $\mathcal{B}(h)$, the space of all bounded linear operators on a separable Hilbert space $h$.
 For any bounded $\Lambda\subset\mathbb{Z}_+$, let
 \begin{equation}\label{eq:finite_tensor_product}
 \mathcal{A}_\Lambda:=\bigotimes_{i\in\Lambda}\mathcal{A}_i,
 ~~\mathcal{A}_i=\mathcal{B},
\end{equation}
be the finite tensor product of von Neumann algebras and
\begin{equation}\label{eq:infinite_tensor_product}
\mathcal{A}:= \bigotimes_{i\in\mathbb{Z}_+}\mathcal{A}_i
 \end{equation}
 be the infinite tensor product of von Neumann algebras \cite{BR, Na}.
  For each $i\in
\mathbb{Z}_+$,
let $J_i$ be the embedding homomorphism
$$J_i: \mathcal{B} \hookrightarrow I_0\otimes I_1\otimes\cdots\otimes I_{i-1}
\otimes\mathcal{B}\otimes
I_{i+1}\otimes\cdots=:I_{i-1]}\otimes\mathcal{B}\otimes I_{[i+1}$$
defined by $$J_i(a)=I_{i-1]}\otimes a\otimes I_{[i+1},\quad\forall
a\in\mathcal{B}.$$ For each $\Lambda\subset\mathbb{Z}_+$, we
identify $\mathcal{A}_\Lambda$ as a subalgebra of $\mathcal{A}$.
We denote $\mathcal{A}_{n]}$ the subalgebra of $\mathcal{A}$, generated by
 the first $(n+1)$ factors, i.e., by the elements of the form
 $$a_{n]}=a_0\otimes a_1\otimes\cdots\otimes a_n\otimes I_{[n+1}=J_0(a_0)J_1(a_1)\cdots J_n(a_n)$$
with $a_0,a_1,\cdots, a_n\in\mathcal{B}$.

A bilinear map $\mathcal{E}$ from $\mathcal{B}\otimes\mathcal{B}$
to $\mathcal{B}$ is called a transition expectation if it is
completely positive and sub-Markovian in the sense that \cite{AW}
\begin{equation}\label{eq:sub-Markovian}
\mathcal{E}(I\otimes
I)\le I.  
\end{equation}
\begin{rem}\label{rem:sub-Markovian}
In the literature, it is required in general the Markovian property, i.e., $\mathcal{E}(I\otimes
I)= I$, to define quantum Markov chains. The sub-Markovian condition \eqref{eq:sub-Markovian} is definitely weaker than the Markovian condition. We emphasize, however, that when we apply the QMCs to special models, like the OQRWs of the present model, it is generally required to use sub-Markovian transition expectations in order to properly recover the original dynamics. Nonetheless, as will be seen in Definition \ref{def:qmc}, since the QMCs are always defined by a pair of initial states and transition expectations, we have a room to recover the Markovian property, and this really works in the present model. We therefore impose the Markovian property only when we speak together with initial states and transition expectations. 
\end{rem}
Given a sequence of transition expectations $(\mathcal
E^{(n)})_{n\ge 0}$, for each $m\ge 0$ we will define a
(unique) completely positive, sub-Markovian map $E_{m]}:
\mathcal{A} \rightarrow \mathcal{A}_{m]}$. Since we have sub-Markovian transition expectations in general, we need some auxiliary preparation.
\begin{lem}\label{lem:cp-sub-Markovian map}
For each $n\ge 0$, there exists a (unique) nonnegative element, denoted by $ \ol{b}(n)\in \mathcal B$, such that $ \ol{b}(n)\le I$ and 
\[
\lim_{k\to \infty}\mathcal E^{(n)}(I\otimes \mathcal E^{(n+1)}(I\otimes \cdots\otimes\mathcal E^{(n+k)}(I\otimes I)))= \ol{b}(n).
\]
In the case that the transition expectations $(\mathcal
E^{(n)})_{n\ge 0}$ are Markovian, $ \ol{b}(n)=I$.
\end{lem}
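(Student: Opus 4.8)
The plan is to build $\ol b(n)$ as the decreasing limit of the sequence
\[
c_k(n):=\mathcal E^{(n)}\bigl(I\otimes\mathcal E^{(n+1)}\bigl(I\otimes\cdots\otimes\mathcal E^{(n+k)}(I\otimes I)\bigr)\bigr),\qquad k\ge 0,
\]
and then argue that the limit operator inherits the required bounds. First I would check that each $c_k(n)$ is a well-defined element of $\mathcal B$: this is immediate since every $\mathcal E^{(m)}$ maps $\mathcal B\otimes\mathcal B$ into $\mathcal B$, so the nested expression makes sense and lies in $\mathcal B$. Next I would verify monotonicity, $c_{k+1}(n)\le c_k(n)$. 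The key observations are that each transition expectation is completely positive, hence positive, hence monotone in each argument when the other is fixed at a positive element; and that the innermost term satisfies $\mathcal E^{(n+k+1)}(I\otimes I)\le I$ by the sub-Markovian property \eqref{eq:sub-Markovian}. Feeding $\mathcal E^{(n+k+1)}(I\otimes I)\le I$ into the $(n+k)$-th slot in place of the $I$ that appears in $c_k(n)$, and then propagating the inequality outward through the monotone maps $\mathcal E^{(n+k-1)},\dots,\mathcal E^{(n)}$, yields $c_{k+1}(n)\le c_k(n)$. Also $c_k(n)\ge 0$ for all $k$ by positivity, and $c_k(n)\le I$: indeed $c_0(n)=\mathcal E^{(n)}(I\otimes I)\le I$, and for larger $k$ one uses $\mathcal E^{(n+1)}(\cdots)\le I$ together with monotonicity again, or simply notes $c_k(n)\le c_0(n)\le I$ from the monotonicity just established.

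Having a decreasing sequence $(c_k(n))_k$ of self-adjoint operators bounded below by $0$, I would invoke the standard fact that a bounded monotone decreasing net of self-adjoint operators in a von Neumann algebra converges strongly to an element of that algebra (the infimum exists in $\mathcal B$ since $\mathcal B$ is a von Neumann algebra, hence monotone complete). Call this strong limit $\ol b(n):=\inf_k c_k(n)=\mathrm{s\text{-}lim}_{k\to\infty}c_k(n)\in\mathcal B$. Then $0\le\ol b(n)\le c_0(n)\le I$, which gives the asserted bound $\ol b(n)\le I$ and nonnegativity. Uniqueness is automatic: the strong limit of a sequence is unique.

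Finally, for the Markovian case, if every $\mathcal E^{(m)}$ satisfies $\mathcal E^{(m)}(I\otimes I)=I$, then an easy induction on $k$ shows $c_k(n)=I$ for all $k$: the base case is $c_0(n)=\mathcal E^{(n)}(I\otimes I)=I$, and if the inner block $\mathcal E^{(n+1)}(I\otimes\cdots\otimes\mathcal E^{(n+k)}(I\otimes I))$ equals $I$ then $c_{k+1}(n)=\mathcal E^{(n)}(I\otimes I)=I$. Hence $\ol b(n)=I$.

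I expect the only genuine subtlety to be the convergence step — making sure one is entitled to pass to a strong limit inside a von Neumann algebra and that the limit again lies in $\mathcal B$ — together with being careful that the monotonicity argument propagates the inequality through the correct slot of each nested transition expectation (the first tensor factor is always fed a fixed $I$, and it is the second factor that carries the recursion). Both points are routine given that $\mathcal B$ is a von Neumann algebra and each $\mathcal E^{(m)}$ is completely positive; the rest is bookkeeping.
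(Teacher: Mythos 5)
Your proof is correct and follows essentially the same route as the paper: the sub-Markovian bound $\mathcal E^{(n+k+1)}(I\otimes I)\le I$ plus positivity of each $\mathcal E^{(m)}$ makes the nested sequence decreasing and bounded below by $0$, and the strong limit exists in $\mathcal B$ by monotone completeness of the von Neumann algebra (the paper cites this as Vigier's theorem). Your treatment of the monotonicity propagation and the Markovian case is just a more explicit version of what the paper leaves implicit.
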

\begin{proof}
The second statement is trivial. 
Define $a_k^{(n)}:=\mathcal E^{(n)}(I\otimes \mathcal E^{(n+1)}(I\otimes \cdots\otimes\mathcal E^{(n+k)}(I\otimes I)))$. By \eqref{eq:sub-Markovian}, $\{a_k^{(n)}\}_{k\ge 0}$ is a sequence of positive decreasing operators on $\mathcal B$. Hence by Vigier's Theorem \cite{RSz.-N} it strongly converges to a nonnegative element, say $ \ol{b}(n)\in \mathcal B$.  
\end{proof}
In order to define $E_{m]}:
\mathcal{A} \rightarrow \mathcal{A}_{m]}$, first for an element $a_{n]}=a_0\otimes\cdots\otimes a_n\otimes I_{[n+1}\in \mathcal A_{n]}$, $n\ge m$, we define 
\begin{eqnarray}\label{3.2}
&&E_{m]}(a_{n]}):=a_0\otimes\cdots\otimes a_{m-1}\otimes\mathcal{E}^{(m)}
(a_m\otimes\mathcal{E}^{(m+1)}(a_{m+1}\otimes\cdots\nonumber\\
&&\hskip 3true cm\otimes\mathcal{E}^{(n)}(a_n\otimes  \ol{b}(n+1)
 ))).
\end{eqnarray}
And for $a=a_0\otimes a_{1}\otimes\cdots\in \mathcal A$, we let
\begin{equation}\label{eq:cp_conditional_expectation}
E_{m]}(a):=\lim_{n\to \infty} E_{m]}(a_{n]}).
\end{equation}
See \cite{Ac1, Ac2, Ac3, AW}.  

Suppose that a sequence of transition expectations $(\mathcal E^{(n)})_{n\ge 0}$ and a state $\phi_0$ on $\mathcal{B}$ are given. We define a positive definite functional $\phi$ on $\mathcal A$ by 
\begin{equation}\label{eq:qmc}
\phi(a):=\phi_0(E_{0]}(a)),\quad a\in \mathcal A. 
 \end{equation}
Notice that by \eqref{eq:sub-Markovian} and Lemma \ref{lem:cp-sub-Markovian map}, and from the definition of $E_{0]}$ in \eqref{3.2} and \eqref{eq:cp_conditional_expectation}, $\phi$ is sub-Markovian, meaning that $\phi(I\otimes I\otimes\cdots)\le 1$.
\begin{define}\label{def:qmc}
(i) A pair $\left(\phi_0, (\mathcal E^{(n)})_{n\ge 0}\right)$ of a state 
$\phi_0$ on $\mathcal{B}$ and  a sequence of transition expectations $(\mathcal E^{(n)})_{n\ge 0}$ is called a Markov pair if the positive definite functional $\phi$ in \eqref{eq:qmc} defines a state on $\mathcal A$, i.e., it is Markovian in the sense that 
\[
\phi(I\otimes I\otimes\cdots)=1.
\]
(ii) A Markov pair $\left(\phi_0, (\mathcal E^{(n)})_{n\ge 0}\right)$, or alternatively the state $\phi$ in \eqref{eq:qmc} defined by the pair, is called a nonhomogeneous QMC with initial state $\phi_0$.  
When $\mathcal{E}^{(n)}=\mathcal{E}$ for all $n$, we say that the QMC is  homogeneous.
\end{define}
\begin{rem}
The state $\phi$ in the Definition \ref{def:qmc} was called a generalized Markov chain in \cite{AW}. 
\end{rem}
We introduce a typical way of defining the transition expectations
\cite{AF, AW}. Denote by
 $\text{Tr}_i$, $i=1,2$ the partial traces on $\mathcal B\otimes \mathcal B$ defined by
 $$\text{Tr}_1(a\otimes b)=\text{Tr}(a)b,\quad \text{Tr}_2(a\otimes b)=\text{Tr}(b)a.$$
 Let $\{K_i\}_{i\in\mathbb{Z}_+}$
 be a set of Hilbert-Schmidt operators on   $\mathcal{B}\otimes\mathcal{B}$  satisfying
\begin{equation}\label{3.3}
\sum_i\|K_i\|^2<\infty\text{ and }
\sum_i\text{Tr}_2(K_iK_i^*)\le I.
 \end{equation}
Then a transition expectation is defined by \cite{AF, AW}
  \begin{equation}\label{3.4}
       \mathcal{E}(a):=\sum_i \text{Tr}_2(K_iaK_i^*),\quad a\in\mathcal{B}\otimes\mathcal{B}.
 \end{equation}
In this paper, the transition expectations of the type in
\eqref{3.4} with suitably chosen operators $\{K_i\}$ will play a
central role. We notice that in the literature, the equality was required in the equation \eqref{3.3} to define transition expectations satisfying the equality in \eqref{eq:sub-Markovian}. By relaxing it to an inequality as above, it will define a transition expectation which is sub-Markovian in the sense of \eqref{eq:sub-Markovian}. In the applications, like in the present model, the sub-Markovian property is natural. We remark also that Park and Shin computed the
dynamical entropy of generalized QMC constructed by transition
expectations of the type in \eqref{3.4} \cite{P, PS}.

\subsection{Reducible and irreducible QMCs}

In this subsection, we discuss the reducibility and irreducibility of  QMCs.

 We introduce the notion of the reducibility of QMC \cite{AK1, AK2}. Given a projection
 $p\in \mathcal B$ and any $n\in \mathbb Z_+$, we denote
\begin{equation}\label{eq:projection_prototype}
p_{[n}:=I\otimes I \otimes\cdots I \otimes \overset{n\text{th}}p\otimes p\otimes\cdots\in \mathcal A.
\end{equation}
We define a subset of projections in $\mathcal A$ by
\begin{equation}\label{eq:projection_space}
\mathcal P_0:=\{p_{[n}:\,p\in \mathcal B, \text{ a
projection},\,\,n\in \mathbb Z_+\}.
\end{equation}
 \begin{define}\label{def4.5}
A quantum Markov chain is called reducible if there exists a
nontrivial projection $p\in \mathcal B $ and $n_0\in \mathbb Z_+$
 such
that \begin{equation}\label{eq:reducing}
E_{0]}(p_{[n_0}ap_{[n_0})=E_{0]}(a)
\end{equation}
 for all $a\in\mathcal{A}$. Otherwise it is called
irreducible. Any projection satisfying \eqref{eq:reducing} is called a reducing projection.
\end{define}
\begin{rem}
In the references \cite{AK1, AK2}, the reducing projections are
allowed to take much more general form. But here we will confine them to be of the forms in \eqref{eq:projection_space}. It will be turned out that this is enough.  
 \end{rem}
\begin{thm}\label{thm4.5-1}
The QMC is reducible if and only if $E_{0]}(I-p_{[n_0})=0$ for some
nontrivial projection $p_{[n_0}$.
\end{thm}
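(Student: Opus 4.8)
The plan is to obtain both directions using only the linearity and complete positivity of $E_{0]}$, together with the elementary fact that a positive $2\times 2$ operator matrix one of whose diagonal entries vanishes has the two corresponding off-diagonal entries equal to $0$. The forward implication is immediate: if the QMC is reducible with reducing projection $p\in\mathcal B$ and index $n_0$, then setting $a=I$ in \eqref{eq:reducing} and using $p_{[n_0}\,I\,p_{[n_0}=p_{[n_0}$ yields $E_{0]}(p_{[n_0})=E_{0]}(I)$, whence $E_{0]}(I-p_{[n_0})=0$ by linearity; here $p_{[n_0}$ is nontrivial exactly when $p$ is, since $p$ occupies the tail of the infinite tensor product.

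For the converse, suppose $E_{0]}(I-p_{[n_0})=0$ for a nontrivial projection $p_{[n_0}$, and put $q:=I-p_{[n_0}$, which is again a projection in $\mathcal A$. The key step is to show that $E_{0]}(qa)=E_{0]}(aq)=0$ for every $a\in\mathcal A$. Since $q^2=q$, in $M_2(\mathcal A)$ we have
\[
\begin{pmatrix} q & qa\\ a^*q & a^*qa\end{pmatrix}
=\begin{pmatrix} q & 0\\ a^*q & 0\end{pmatrix}\begin{pmatrix} q & qa\\ 0 & 0\end{pmatrix}\ge 0 ,
\]
so applying $E_{0]}$ entrywise and invoking $2$-positivity (a consequence of the complete positivity of $E_{0]}$) yields a positive $2\times 2$ operator matrix whose $(1,1)$-entry is $E_{0]}(q)=0$ and whose $(1,2)$-entry is $E_{0]}(qa)$; hence $E_{0]}(qa)=0$. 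Replacing $a$ by $a^*$ and using that $E_{0]}$ preserves adjoints gives $E_{0]}(aq)=0$ as well.

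Finally, writing $p_{[n_0}=I-q$, for any $a\in\mathcal A$ one expands $E_{0]}(a)=E_{0]}\bigl((p_{[n_0}+q)\,a\,(p_{[n_0}+q)\bigr)$ into the four terms $E_{0]}(p_{[n_0}ap_{[n_0})$, $E_{0]}(p_{[n_0}aq)$, $E_{0]}(qap_{[n_0})$ and $E_{0]}(qaq)$; each of the last three carries the factor $q$ at the far left or the far right of its argument, hence vanishes by the previous step. Therefore $E_{0]}(p_{[n_0}ap_{[n_0})=E_{0]}(a)$ for all $a\in\mathcal A$, i.e.\ $p_{[n_0}$ is a reducing projection and the QMC is reducible. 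I do not anticipate a serious obstacle: the statement is quite soft, and the only point that needs care is that the argument must not use unitality of $E_{0]}$ (which fails here, $E_{0]}$ being merely sub-Markovian) — the $2\times 2$-matrix manipulation is precisely what sidesteps this.
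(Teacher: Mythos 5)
Your proof is correct, and its overall architecture matches the paper's: the forward direction by putting $a=I$, and the converse by expanding $a=(p_{[n_0}+q)a(p_{[n_0}+q)$ and killing the three cross terms. The only genuine difference is the tool used to show that $E_{0]}(q)=0$ forces the corner terms to vanish. The paper invokes the operator Schwarz inequality $E_{0]}(b)^*E_{0]}(b)\le E_{0]}(b^*b)$ for the completely positive map $E_{0]}$ and estimates
\[
E_{0]}(pap^\bot)^*E_{0]}(pap^\bot)\le E_{0]}(p^\bot a^*pap^\bot)\le \|a\|^2E_{0]}(p^\bot)=0,
\]
whereas you factor $\bigl(\begin{smallmatrix} q & qa\\ a^*q & a^*qa\end{smallmatrix}\bigr)$ as $X^*X$ and use $2$-positivity together with the fact that a positive $2\times 2$ operator matrix with a vanishing diagonal entry has vanishing off-diagonal entries in that row and column. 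Both routes are valid for a sub-unital (contractive) completely positive map, so your stated concern about non-unitality is moot for the paper's argument as well; your version has the mild advantage of needing only $2$-positivity rather than the full Kadison--Schwarz inequality, while the paper's is a one-line estimate once that inequality is quoted. One cosmetic remark: your final expansion requires $E_{0]}(bq)=0$ and $E_{0]}(qb)=0$ for \emph{arbitrary} $b$ (e.g.\ $b=pa$), which is exactly what your key step delivers since $a$ there is arbitrary — it is worth saying so explicitly.
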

\begin{proof}
In the proof, for notational simplicity we just put $p$ for
$p_{[n_0}$. Suppose that $p$ is a nontrivial projection  such that
$E_{0]}(I-p)=0$. That is, $E_{0]}(p^\bot)=0$. Since $E_{0]}$ is
completely positive, it satisfies a Schwarz inequality:
$E_{0]}(b)^*E_{0]}(b)\le E_{0]}(b^*b)$ (see Theorem 2.10 of
\cite{F}, for example). Therefore,
$$E_{0]}(pap^\bot)^*E_{0]}(pap^\bot)\le E_{0]}(p^\bot a^*pap^\bot)\le E_{0]}(p^\bot a^*ap^\bot)\le \|a\|^2E_{0]}(p^\bot)=0.$$
Thus $E_{0]}(pap^\bot)=0$ and so $E_{0]}(p^\bot ap)=0$. Similarly we have  $E_{0]}(p^\bot ap^\bot)=0$. Therefore we get
\[
E_{0]}(a)=E_{0]}((p+p^\bot)a(p+p^\bot)) =E_{0]}(pap),
\]
for all $a\in\mathcal{A}$. This means that the QMC is reducible. The converse trivially holds by taking $a=I$.
\end{proof}

\section{Quantum  Markov  chains associated with OQRWs}

In this section, we construct QMCs associated with
OQRWs. As mentioned in the Introduction, this is a slight modification of the one developed in \cite{DM}. We will construct a nonhomogeneous QMC, but in \cite{DM}, a homogeneous QMC was considered. We will use notations from the previous section. In the sequel, we also use the density matrices as also for states (positive definite functions, in general), i.e., if $\rho$ is a positive definite trace class operator in $\mathcal B$, then for any $a\in \mathcal B$, we write $\mathrm{Tr}(\rho a)$ or $\rho(a)$ denoting the same value of the functional at $a$. Let us define some notations which will be used in the sequel. For $i, j\in\Lambda$, a path from
$i$ to $j$ is any finite sequence $i_0, i_1, \cdots, i_l$ in
$\Lambda$ with $l\ge1$, such that $i_0=i$ and $i_l=j$. We denote
such a path by $\pi(i_0,\cdots,i_l)$ and let $\mathcal{P}(i,j)$ be
the set of all paths from $i$  to $j$. For $\pi(i_0,\cdots,i_l)$
in $\mathcal{P}(i,j)$ we denote by $B_{\pi(i_0,\cdots,i_l)}$ the
operator on $\mathcal{H}$:
$$B_{\pi(i_0,\cdots,i_l)}=B_{i_{l-1}}^{i_l}\cdots B_{i_0}^{i_1}=B_{i_{l-1}}^{j}\cdots B_{i}^{i_1}.$$

\subsection{QMCs for OQRWs}

Let $\mathcal{M}$ be an OQRW given by (\ref{2.3}). We fix a
density operator
$\rho^{(0)}\in\mathcal{B}(\mathcal{H}\otimes\mathcal{K})$ of the
form
$$\rho^{(0)}=\sum_i\rho^{(0)}_i\otimes|i\rangle\langle i|,$$
 where $\rho_i^{(0)}\ge0$ and $\sum_i\text{Tr}(\rho^{(0)}_i)=1$  for all $i$.
For an initial state $\rho^{(0)}$,
$\rho^{(n)}:=\mathcal{M}^n(\rho^{(0)})$ is the state at time $n$.
Then we can write
\begin{equation}\label{4.1}
\rho^{(n)}=\sum_i\rho^{(n)}_i\otimes|i\rangle\langle i|.
\end{equation}
We would like to remind the reader that starting with any initial state, even not of the block-diagonal form, after the evolution of OQRW the states result in the block-diagonal form as in \eqref{4.1} \cite{APSS}. Therefore, it is natural and sufficient to consider also the observables of the block-diagonal form. So, define a subalgebra $\mathcal B_0\subset \mathcal B(\mathcal H\otimes \mathcal K)$ by 
\begin{equation}\label{eq:subalgebra}
\mathcal B_0=\{\sum_{i\in \Lambda}a(i)\otimes |i\rangle\langle i|:a(i)\in \mathcal B(\mathcal H)\text{ for all }i\in \Lambda\text{ and }\sum_i\|a(i)\|<\infty\}.
\end{equation}
Let $\mathcal B$ be the von Neumann subalgebra of $\mathcal B(\mathcal H\otimes \mathcal K)$ obtained by a weak closure of $\mathcal B_0$. 
We consider the algebra
$$\mathcal{A}=\bigotimes_{i\in\mathbb{Z}_+}\mathcal{A}_i$$
where $\mathcal{A}_i=\mathcal{B}$
for all $i\in\mathbb{Z}_+$. For each $n=0, 1,2,\cdots$, define the
following operators
\begin{eqnarray}
&&A^{(n)}_{ij}=\frac{1}{\text{Tr}(\rho^{(n)}_j)^{1/2}}((\rho_j^{(n)})^{1/2}\otimes|i\rangle\langle j|),
\quad i, j\in\Lambda,\nonumber\\
&&K_{ij}^{(n)}={M^i_j}^*\otimes A_{ij}^{(n)}.\label{4.2}
 \end{eqnarray}
 Here it is assumed $A^{(n)}_{ij}=0$ if $\rho^{(n)}_j=0$. Notice that by this convention, we can allow any kind of initial states $\rho^{(0)}$ so that $\rho^{(0)}_i$ might be zero for some $i\in \Lambda$. This is important when we recover the dynamics of OQRW itself by the QMC. See Proposition \ref{prop:recovering_classical_property}.
 \begin{prop}\label{prop4.1}
 For each $n=0, 1,2,\cdots$,
$$\mathrm{Tr}_2(\sum_{i,j}K_{ij}^{(n)}{K_{ij}^{(n)}}^*)\le I$$ holds.
\end{prop}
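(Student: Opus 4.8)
The plan is to compute $\mathrm{Tr}_2\big(\sum_{i,j}K_{ij}^{(n)}{K_{ij}^{(n)}}^*\big)$ explicitly and recognize the answer as a sub-sum of the left-hand side of \eqref{2.2}. The key observation is that, by \eqref{4.2}, $K_{ij}^{(n)}={M^i_j}^*\otimes A_{ij}^{(n)}$ is an elementary tensor on $(\mathcal H\otimes\mathcal K)\otimes(\mathcal H\otimes\mathcal K)$, so
\[
K_{ij}^{(n)}{K_{ij}^{(n)}}^*=\big({M^i_j}^*M^i_j\big)\otimes\big(A_{ij}^{(n)}{A_{ij}^{(n)}}^*\big),
\]
and since $\mathrm{Tr}_2(a\otimes b)=\mathrm{Tr}(b)a$ (and both factors above are block-diagonal, hence in $\mathcal B$, so $\mathrm{Tr}_2$ applies) we get
\[
\mathrm{Tr}_2\big(K_{ij}^{(n)}{K_{ij}^{(n)}}^*\big)=\mathrm{Tr}\big(A_{ij}^{(n)}{A_{ij}^{(n)}}^*\big)\,{M^i_j}^*M^i_j.
\]
Thus the whole problem reduces to evaluating the scalar coefficient $\mathrm{Tr}(A_{ij}^{(n)}{A_{ij}^{(n)}}^*)$.

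Next I would compute that coefficient from the explicit form of $A_{ij}^{(n)}$ in \eqref{4.2}. When $\rho^{(n)}_j\neq 0$, using $\big(|i\rangle\langle j|\big)^*=|j\rangle\langle i|$, the identity $|i\rangle\langle j|\,|j\rangle\langle i|=|i\rangle\langle i|$, and the self-adjointness of $(\rho^{(n)}_j)^{1/2}$, one finds
\[
A_{ij}^{(n)}{A_{ij}^{(n)}}^*=\frac{1}{\mathrm{Tr}(\rho^{(n)}_j)}\,\rho^{(n)}_j\otimes|i\rangle\langle i|,\qquad\text{hence}\qquad\mathrm{Tr}\big(A_{ij}^{(n)}{A_{ij}^{(n)}}^*\big)=1,
\]
which is exactly where the normalization factor $\mathrm{Tr}(\rho^{(n)}_j)^{-1/2}$ in \eqref{4.2} is needed; when $\rho^{(n)}_j=0$ the convention $A_{ij}^{(n)}=0$ makes the coefficient vanish. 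In either case $\mathrm{Tr}(A_{ij}^{(n)}{A_{ij}^{(n)}}^*)\le 1$, so, summing over $i,j$ (each summand being positive),
\[
\mathrm{Tr}_2\Big(\sum_{i,j}K_{ij}^{(n)}{K_{ij}^{(n)}}^*\Big)=\sum_{i,j}\mathrm{Tr}\big(A_{ij}^{(n)}{A_{ij}^{(n)}}^*\big)\,{M^i_j}^*M^i_j\le\sum_{i,j}{M^i_j}^*M^i_j=I,
\]
the last equality being \eqref{2.2}; note equality holds here precisely when $\rho^{(n)}_j\neq 0$ for every $j\in\Lambda$.

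I do not expect a genuine obstacle; the only point deserving explicit comment is the interchange of the (possibly infinite) sums over $i,j$ with the partial trace $\mathrm{Tr}_2$ and the mode of convergence. Since all the $K_{ij}^{(n)}{K_{ij}^{(n)}}^*$ are positive and $\sum_{i,j}{M^i_j}^*M^i_j$ converges strongly to $I$ by \eqref{2.1}--\eqref{2.2}, the partial sums of $\sum_{i,j}\mathrm{Tr}(A_{ij}^{(n)}{A_{ij}^{(n)}}^*)\,{M^i_j}^*M^i_j$ form an increasing net bounded above by $I$, and the normality (monotone continuity on positive elements) of $\mathrm{Tr}_2$ allows passing to the limit termwise. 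The computation of $A_{ij}^{(n)}{A_{ij}^{(n)}}^*$ is the only calculational step, and it is routine.
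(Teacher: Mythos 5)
Your proof is correct and follows essentially the same route as the paper's: factor $K_{ij}^{(n)}{K_{ij}^{(n)}}^*$ as an elementary tensor, observe that $\mathrm{Tr}(A_{ij}^{(n)}{A_{ij}^{(n)}}^*)$ equals $1$ (or $0$ when $\rho_j^{(n)}=0$), and bound the resulting sum of ${M^i_j}^*M^i_j$ by $I$. The only cosmetic difference is that the paper evaluates the sub-sum exactly as $\sum_{j:\rho_j^{(n)}\neq 0}I_{\mathcal H}\otimes|j\rangle\langle j|$ via \eqref{2.1} rather than dominating it by the full sum \eqref{2.2}.
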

\begin{proof}
\begin{eqnarray*}
\text{Tr}_2(\sum_{i,j}K_{ij}^{(n)}{K_{ij}^{(n)}}^*)&=&
{\sum_{j:\rho_j^{(n)}\neq 0}}\sum_i\frac{\text{Tr}(\rho^{(n)}_j\otimes|i\rangle\langle
i|)}{\text{Tr}(\rho^{(n)}_j)}{M^i_j}^*M^i_j\\
&=&{\sum_{j:\rho_j^{(n)}\neq 0}}\sum_{i}{B^i_j}^*B^i_j\otimes |j\rangle\langle j|\\
&=&{\sum_{j:\rho_j^{(n)}\neq 0}}I_{\mathcal H}\otimes |j\rangle\langle j|\le I.
\end{eqnarray*}
This proves the assertion.
\end{proof}
By the above proposition we can define transition expectations.
 \begin{define}\label{def4.2}(Transition expectations)
For each $n=0,1,2,\cdots$, and $x,y\in \mathcal B$, define
\begin{eqnarray}
\mathcal{E}^{(n)}(x\otimes y)&:=&\sum_{i,j}\text{Tr}_2(K_{ij}^{(n)}(y\otimes x){K_{ij}^{(n)}}^*)\nonumber\\
                             &=&{\sum_{j:\rho_j^{(n)}\neq 0}}\sum_{i}\frac{\text{Tr}(\rho_j^{(n)}\otimes|j\rangle\langle
                             j|x)}{\text{Tr}(\rho_j^{(n)})}{M^i_j}^*yM^i_j.\label{4.3}
  \end{eqnarray}
 \end{define}
 The above transition expectations are of the form in \eqref{3.4}, but before taking a partial trace a transposition was applied, leading to the  transpose transition expectation $\mathcal{E}^t$ of \cite{DM}. To say more, one may construct transition expectations by changing the roles of $x$ and $y$ in \eqref{4.3}, which gives rise to define a new QMC. But it turns out that the present form is very convenient when we  talk about the dynamics of OQRWs. See, e.g., Proposition \ref{prop:recovering_classical_property}.  
Using the above transition expectations, we define the completely
positive maps $E_{m]}:\mathcal A\to \mathcal A_{m]}$ by
\eqref{eq:cp_conditional_expectation} and define a positive definite functional $\rho$ on
$\mathcal A$ like in  \eqref{eq:qmc}:
\begin{equation}\label{eq:state_qmc}
\rho(a):=\rho^{(0)}(E_{0]}(a)),\quad a\in \mathcal A.
\end{equation} 
Before going further, we refine Lemma \ref{lem:cp-sub-Markovian map} for the present model by showing the following property. Recall the definition given in Lemma \ref{lem:cp-sub-Markovian map}:
\[
\ol b(n)=\lim_{k\to \infty}\mathcal E^{(n)}(I\otimes \mathcal E^{(n+1)}(I\otimes \cdots\otimes\mathcal E^{(n+k)}(I\otimes I))).
\]
For a state of the form $\rho=\sum_i\rho_i\otimes |i\rangle\langle i|$, we let $\L(\rho):=\{i\in \L:\rho_i\neq 0\}$.
\begin{lem}\label{eq:limit_operators_properties}
The operators $\{\ol b(n)\}_{n\ge 0}$ 
for the transition expectations of OQRWs satisfy the following properties.
\begin{enumerate}
\item[(i)] For each $n\ge 0$ and $j\in \L({\rho^{(n)}})$, there exist strictly positive operators $\ol b(n,j)\in \mathcal B(\mathcal H)$ such that 
\[
\ol b(n)=\sum_{j\in \L({\rho^{(n)}})}\ol b(n,j)\otimes |j\rangle\langle j|.
\]
\item[(ii)] For each $j\in \L({\rho^{(n)}})$, it holds that
\[
\sum_{i\in\L(\rho^{(n+1)})}{B_j^i}^*\ol b(n+1,i)B_j^i=\ol b(n,j). 
\]
\item[(iii)] For each $n\ge 0$ and $j\in \L({\rho^{(n)}})$,
\[
\mathrm{Tr}(\rho^{(n)}_j\ol b(n,j))=\mathrm{Tr}(\rho_j^{(n)}).
\]
\end{enumerate}
\end{lem}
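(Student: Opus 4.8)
The plan is to unwind the definition of $\ol b(n)$ from Lemma \ref{lem:cp-sub-Markovian map} using the explicit formula \eqref{4.3} for the transition expectations, and to establish (i)--(iii) essentially simultaneously, since they are tightly intertwined: the block-diagonal structure in (i) lets one even state (ii), and (iii) will be the trace-normalization that makes the ``$\le I$'' in Lemma \ref{lem:cp-sub-Markovian map} actually an equality when tested against the relevant states. First I would compute one step: for $x\in\mathcal B$, the sub-Markovian generator acts by $\mathcal E^{(n)}(I\otimes x)=\sum_{j\in\L(\rho^{(n)})}\sum_i \dfrac{\mathrm{Tr}(\rho_j^{(n)})}{\mathrm{Tr}(\rho_j^{(n)})}\,{M_j^i}^* x M_j^i$ where I have used $\mathrm{Tr}((\rho_j^{(n)}\otimes|j\rangle\langle j|)\cdot I)=\mathrm{Tr}(\rho_j^{(n)})$; wait---more carefully, in \eqref{4.3} the first tensor factor is paired with $x$, so $\mathcal E^{(n)}(I\otimes x)=\sum_{j\in\L(\rho^{(n)})}\sum_i {M_j^i}^* x M_j^i$ when $x$ is such that this makes sense, and since $M_j^i={B_j^i}\otimes|i\rangle\langle j|$, feeding in an operator $x=\sum_i x(i)\otimes|i\rangle\langle i|$ of block-diagonal form produces $\sum_{j\in\L(\rho^{(n)})}\big(\sum_i {B_j^i}^* x(i) B_j^i\big)\otimes|j\rangle\langle j|$. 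This shows that the map $x\mapsto\mathcal E^{(n)}(I\otimes x)$ preserves the block-diagonal algebra $\mathcal B_0$ and acts on the blocks by the adjoint OQRW kernel $x(\cdot)\mapsto\sum_i {B_j^i}^* x(i) B_j^i$ restricted to indices in $\L(\rho^{(n)})$.

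Next I would iterate. Define $a_k^{(n)}$ as in the proof of Lemma \ref{lem:cp-sub-Markovian map}; by the above, each $a_k^{(n)}$ is block-diagonal, $a_k^{(n)}=\sum_{j\in\L(\rho^{(n)})} a_k^{(n,j)}\otimes|j\rangle\langle j|$, with the blocks defined by the backward recursion $a_{k}^{(n,j)}=\sum_{i\in\L(\rho^{(n+1)})}{B_j^i}^* a_{k-1}^{(n+1,i)} B_j^i$ and $a_0^{(n,j)}=\sum_{i}{B_j^i}^* B_j^i$---but here I must be careful that the inner sums run only over indices that are in the support of the corresponding $\rho^{(n+\ell)}$, which is exactly why the limit need not be $I$. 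Since each $a_k^{(n)}$ is positive and decreasing in $k$ (a consequence of sub-Markovianity, as in Lemma \ref{lem:cp-sub-Markovian map}), each block $a_k^{(n,j)}$ is positive decreasing, hence converges strongly to some $\ol b(n,j)\ge 0$; taking the (block-diagonal) limit gives (i) with the recursion (ii) passing to the limit by strong continuity of $x\mapsto {B_j^i}^* x B_j^i$ on bounded sets together with $\sum_i\|{B_j^i}^*\,\cdot\, B_j^i\|\le\|{\,\cdot\,}\|$ from \eqref{2.1}. Strict positivity of $\ol b(n,j)$ I would get by a separate argument: from (iii), $\mathrm{Tr}(\rho_j^{(n)}\ol b(n,j))=\mathrm{Tr}(\rho_j^{(n)})>0$ for $j\in\L(\rho^{(n)})$ forces $\ol b(n,j)\ne 0$, and then using (ii) backward---or directly, using that $\ol b(n,j)$ dominates $\mathcal E^{(n)}$ applied to a strictly positive element along any finite path and the reachability built into $\L(\rho^{(n)})$ from $\L(\rho^{(0)})$---one upgrades nonzero to strictly positive; if that is awkward, the cleanest route is to note $a_k^{(n,j)}\ge {\rm (something\ bounded\ below)}$ isn't automatic, so instead I would prove strict positivity by first proving (iii) and then arguing that a positive operator whose pairing with the (strictly positive on its support) $\rho_j^{(n)}$ equals $\mathrm{Tr}(\rho_j^{(n)})=\|\rho_j^{(n)}\|_1$ while being $\le I$ must actually be $I$ on the range of $\rho_j^{(n)}$; combined with the recursion (ii) this propagates strict positivity, though pinning down exactly which subspace $\ol b(n,j)$ is strictly positive on (all of $\mathcal H$, or the relevant reachable subspace) is the delicate point.

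For (iii), the natural approach is to test the defining limit against $\rho^{(n)}$ and use the OQRW trace-preservation. Concretely, $\mathrm{Tr}(\rho^{(n)}\,a_k^{(n)})=\sum_{j\in\L(\rho^{(n)})}\mathrm{Tr}(\rho_j^{(n)} a_k^{(n,j)})$, and I would show by induction on $k$ that this equals $\sum_{j\in\L(\rho^{(n)})}\mathrm{Tr}(\rho_j^{(n)})=1$ using that $\rho^{(n+1)}=\mathcal M(\rho^{(n)})$, i.e. $\rho^{(n+1)}_i=\sum_j B_j^i\rho_j^{(n)}{B_j^i}^*$, together with $\sum_i{B_j^i}^*B_j^i=I$ and the cyclicity of the trace; the key identity is $\sum_j\mathrm{Tr}(\rho_j^{(n)}\sum_i{B_j^i}^* x(i) B_j^i)=\sum_i\mathrm{Tr}((\sum_j B_j^i\rho_j^{(n)}{B_j^i}^*) x(i))=\sum_i\mathrm{Tr}(\rho_i^{(n+1)} x(i))$, which also transparently explains why only indices in $\L(\rho^{(n+1)})$ contribute at the next level. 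Passing $k\to\infty$ by dominated convergence (all operators bounded by $I$, $\rho^{(n)}$ trace class) gives $\mathrm{Tr}(\rho^{(n)}\ol b(n))=1=\sum_j\mathrm{Tr}(\rho_j^{(n)})$; since $0\le\ol b(n,j)\le I$ we have $\mathrm{Tr}(\rho_j^{(n)}\ol b(n,j))\le\mathrm{Tr}(\rho_j^{(n)})$ termwise, and equality of the sums forces equality in each term, which is (iii). I expect the main obstacle to be the strict positivity claim in (i): monotone limits of positive operators are only guaranteed nonnegative, and nonzero-ness from (iii) alone does not give strictly positive, so one genuinely needs to exploit the support/reachability structure of the $\rho^{(n)}$ and the recursion (ii)—this is where the argument requires the most care, and possibly a hypothesis (implicit in the setup) that the supports $\L(\rho^{(n)})$ stabilize or that the relevant $B_j^i$ are such that $\ol b(n,j)$ is strictly positive on all of $\mathcal H$.
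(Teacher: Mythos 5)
Your proposal is correct and follows essentially the same route as the paper: both unwind $\ol b(n)$ via \eqref{4.3} into block-diagonal operators whose blocks form decreasing positive sequences (explicit path sums $\sum B_\pi^*B_\pi$ in the paper, your equivalent backward recursion), invoke Vigier's theorem blockwise to get (i), obtain (ii) by passing the recursion to the limit, and obtain (iii) from the trace preservation of $\mathcal M$. Your argument for (iii) --- proving $\mathrm{Tr}(\rho^{(n)}a_k^{(n)})=1$ by induction on $k$ and then extracting termwise equality from $\ol b(n,j)\le I$ --- is a mild variant of the paper's, which instead checks directly that the terms omitted by restricting the path sums to $\L(\rho^{(n+l)})$ contribute zero trace; both are valid and of comparable length. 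On the strict positivity in (i), your caution is warranted but the paper is no more careful than you are: it likewise derives it ``from (iii)'', which really only yields that $\ol b(n,j)$ acts as the identity on the support of $\rho_j^{(n)}$, hence is a nonzero positive operator rather than strictly positive on all of $\mathcal H$; that weaker conclusion is all that is actually used downstream (e.g.\ in Proposition \ref{prop4.7}), so this is a looseness shared with the paper rather than a gap in your argument.
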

\begin{proof}
(i) Define $a^{(n)}_k:=\mathcal E^{(n)}(I\otimes \mathcal E^{(n+1)}(I\otimes \cdots\otimes\mathcal E^{(n+k)}(I\otimes I)))$. Then we have $\ol b(n)=\lim_{k\to \infty}a_k^{(n)}$. By directly computing with the definition \eqref{4.3} we get 
\[
a_k^{(n)}= \sum_{i_{n}\in \Lambda(\rho^{(n)})}b^{(n)}(i_n;k)\otimes|i_{n}\rangle\langle i_{n}|,
\] 
where 
\[
b^{(n)}(i_n;k)= \sum_{i_{n+1}\in\Lambda(\rho^{(n+1)})}\cdots\sum_{i_{n+k}\in \Lambda(\rho^{(n+k)})}\,B^*_{\p(i_{n},\cdots, i_{n+k})}B_{\p(i_{n},\cdots, i_{n+k})}.
\]  
By the property \eqref{2.1} we see that $\{b^{(n)}(i_n;k)\}_{k\ge 1}$ is a sequence of decreasing positive definite operators on $\mathcal B$. Thus by Vigier's Theorem \cite{RSz.-N} again, we see that the sequence converges strongly to a nonnegative element, say $\ol b(n,i_n)$ as $k\to \infty$.
We thus get 
\[
\ol b(n)=\lim_{k\to \infty}a_k^{(n)}=\sum_{i_n\in \Lambda(\rho^{(n)})}\,\ol b(n,i_n)\otimes |i_n\rangle\langle i_n|.
\]
The strict positivity of $\ol b(n,j)$ for $j\in \L(\rho^{(n)})$ follows from (iii) whose proof does not use this property.\\
(ii) By the computations in (i), we see that for $j\in \L(\rho^{(n)})$,
\begin{eqnarray*}
&&\sum_{i_{n+1}\in \Lambda(\rho^{(n+1)})}{B_j^{i_{n+1}}}^*\,\ol b(n+1,i_{n+1})B_j^{i_{n+1}}\\
&=& \lim_{k\to \infty} \sum_{i_{n+1}\in\Lambda(\rho^{(n+1)})}{B_j^{i_{n+1}}}^*\,\Big(\sum_{i_{n+2}\in\Lambda(\rho^{(n+2)})}\cdots\sum_{i_{n+k}\in
\Lambda(\rho^{(n+k)})}\,B^*_{\p(i_{n+1},\cdots, i_{n+k})}B_{\p(i_{n+1},\cdots, i_{n+k})}\Big)B_j^{i_{n+1}}\\
&=&\lim_{k\to \infty}\sum_{i_{n+1}\in\Lambda(\rho^{(n+1)})}\cdots\sum_{i_{n+k}\in
\Lambda(\rho^{(n+k)})}\,B^*_{\p(j,i_{n+1},\cdots, i_{n+k})}B_{\p(j,i_{n+1},\cdots, i_{n+k})}\\
&=&\ol b(n,j).
\end{eqnarray*}
(iii) We see again
\begin{eqnarray*}
&&\mathrm{Tr}(\rho^{(n)}_j\ol b(n,j))\\
&=& \lim_{k\to \infty}\sum_{i_{n+1}\in\Lambda(\rho^{(n+1)})}\cdots\sum_{i_{n+k}\in
\Lambda(\rho^{(n+k)})}\, \mathrm{Tr}\left(\rho^{(n)}_jB^*_{\p(j,i_{n+1},\cdots, i_{n+k})}B_{\p(j,i_{n+1},\cdots, i_{n+k})}\right) \\
&=& \lim_{k\to \infty}{\sum_{i_{n+1}}}\cdots{\sum_{i_{n+k}}}\, \mathrm{Tr}\left(\rho^{(n)}_jB^*_{\p(j,i_{n+1},\cdots, i_{n+k})}B_{\p(j,i_{n+1},\cdots, i_{n+k})}\right) \\
&=&\lim_{k\to \infty}\mathrm{Tr}(\rho_j^{(n)})\\
&=&\mathrm{Tr}(\rho_j^{(n)}).
\end{eqnarray*}
Here in the third equality the relation \eqref{2.1} was used and the second equality can be shown by the following argument. Suppose, for example, $i_{n+l}\notin\Lambda(\rho^{(n+l)}) $ for some $1\le l\le k$. We claim that 
\[
\mathrm{Tr}\left(\rho^{(n)}_jB^*_{\p(j,i_{n+1},\cdots,i_{n+l},\cdots, i_{n+k})}B_{\p(j,i_{n+1},\cdots, i_{n+l},\cdots,i_{n+k})}\right)=0.
\]
In fact, 
\begin{eqnarray*}
&&\mathrm{Tr}\left(\rho^{(n)}_jB^*_{\p(j,i_{n+1},\cdots,i_{n+l},\cdots, i_{n+k})}B_{\p(j,i_{n+1},\cdots, i_{n+l},\cdots,i_{n+k})}\right)\\
&=&\mathrm{Tr}\left(B_{\p(j,i_{n+1},\cdots, i_{n+l},\cdots,i_{n+k})}\rho^{(n)}_jB^*_{\p(j,i_{n+1},\cdots,i_{n+l},\cdots, i_{n+k})}\right)\\
&=&\mathrm{Tr}\left(B_{\p(i_{n+l},\cdots,i_{n+k})}B_{\p(j,i_{n+1},\cdots, i_{n+l})}\rho^{(n)}_jB^*_{\p(j,i_{n+1},\cdots,i_{n+l})}B^*_{\p(i_{n+l},\cdots, i_{n+k})}\right).
\end{eqnarray*}
But using the definition of OQRW in \eqref{2.4} we see that 
\begin{eqnarray*}
0&\le&B_{\p(j,i_{n+1},\cdots, i_{n+l})}\rho^{(n)}_jB^*_{\p(j,i_{n+1},\cdots,i_{n+l})}\\
&\le&\sum_j\sum_{i_{n+1}}\cdots\sum_{i_{n+l-1}}B_{\p(j,i_{n+1},\cdots, i_{n+l})}\rho^{(n)}_jB^*_{\p(j,i_{n+1},\cdots,i_{n+l})}\\
&=&\rho^{(n+l)}_{i_{n+l}}=0,
\end{eqnarray*}
by the assumption that $i_{n+l}\notin\Lambda(\rho^{(n+l)})$. This proves the claim and the proof is completed.
 \end{proof} 
In the sequel, by abuse of notations and to save the space, we use ${\sum'_{i_l}}$ for $\sum_{i_l\in \Lambda(\rho^{(l)})}$ whenever there is no danger of confusion.
\begin{lem}\label{lem:reducing_computation}
For any $a_{n]}=a_0\otimes\cdots\otimes a_n\otimes I_{[n+1}\in \mathcal A_{n]}$, we have 
\begin{eqnarray}\label{eq:reducing_computation}
&&E_{0]}(a_{n]})\\
&=&{\sum_{i_0}}'\cdots {\sum_{i_n}}'\,\prod_{k=0}^n\frac{\mathrm{Tr}(\rho_{i_k}^{(k)}\otimes |i_k\rangle\langle i_k|a_k)}{\mathrm{Tr}(\rho_{i_k}^{(k)})}\left(B^*_{\p(i_0,\cdots,i_n)}\ol b(n,i_n)B_{\p(i_0,\cdots,i_n)}\otimes |i_0\rangle\langle i_0|\right).\nonumber
\end{eqnarray} 
\end{lem}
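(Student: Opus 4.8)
The plan is to unfold the finite nested expression that \emph{defines} $E_{0]}(a_{n]})$ in \eqref{3.2}, namely
\[
E_{0]}(a_{n]})=\mathcal E^{(0)}\bigl(a_0\otimes\mathcal E^{(1)}(a_1\otimes\cdots\otimes\mathcal E^{(n)}(a_n\otimes\ol b(n+1)))\bigr),
\]
one transition expectation at a time, using the explicit formula \eqref{4.3}. There is no limit to control here: \eqref{3.2} is already a finite expression, and the structure of $\ol b(n+1)$ has been settled in Lemma \ref{eq:limit_operators_properties}. Moreover the whole computation stays inside the block-diagonal operators $\sum_i z(i)\otimes|i\rangle\langle i|$, since $\ol b(n+1)=\sum_{i\in\L(\rho^{(n+1)})}\ol b(n+1,i)\otimes|i\rangle\langle i|$ by Lemma \ref{eq:limit_operators_properties}(i) and, as the first reduction step will show, each $\mathcal E^{(m)}$ preserves this form.

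First I would isolate the elementary reduction step. From $M^i_j=B^i_j\otimes|i\rangle\langle j|$ one checks that, for any block-diagonal $Z=\sum_{i'}z(i')\otimes|i'\rangle\langle i'|$,
\[
\sum_i {M^i_j}^* Z M^i_j=\Bigl(\sum_i {B^i_j}^*z(i)B^i_j\Bigr)\otimes|j\rangle\langle j|,
\]
where the sum over $i$ effectively runs only over the support of $Z$. Inserting this into \eqref{4.3} gives, whenever $Z$ is block-diagonal with support in $\L(\rho^{(m+1)})$,
\[
\mathcal E^{(m)}(a_m\otimes Z)={\sum_{i_m}}'\,\frac{\mathrm{Tr}(\rho^{(m)}_{i_m}\otimes|i_m\rangle\langle i_m|a_m)}{\mathrm{Tr}(\rho^{(m)}_{i_m})}\,\Bigl({\sum_{i_{m+1}}}'{B^{i_{m+1}}_{i_m}}^*z(i_{m+1})B^{i_{m+1}}_{i_m}\Bigr)\otimes|i_m\rangle\langle i_m|,
\]
which is again block-diagonal with support in $\L(\rho^{(m)})$.

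With this formula in hand I would prove, by downward induction on $m$ from $m=n$ to $m=0$, that
\[
\mathcal E^{(m)}(a_m\otimes\cdots\otimes\mathcal E^{(n)}(a_n\otimes\ol b(n+1)))={\sum_{i_m}}'\cdots{\sum_{i_n}}'\,\prod_{k=m}^n\frac{\mathrm{Tr}(\rho^{(k)}_{i_k}\otimes|i_k\rangle\langle i_k|a_k)}{\mathrm{Tr}(\rho^{(k)}_{i_k})}\,B^*_{\p(i_m,\cdots,i_n)}\ol b(n,i_n)B_{\p(i_m,\cdots,i_n)}\otimes|i_m\rangle\langle i_m|,
\]
with the convention $B_{\p(i_n)}=I$ for the one-point path. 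The base case $m=n$ is precisely the reduction formula applied to $Z=\ol b(n+1)$, after invoking Lemma \ref{eq:limit_operators_properties}(ii) to collapse ${\sum_{i_{n+1}}}'{B^{i_{n+1}}_{i_n}}^*\ol b(n+1,i_{n+1})B^{i_{n+1}}_{i_n}=\ol b(n,i_n)$. For the inductive step one applies $\mathcal E^{(m-1)}(a_{m-1}\otimes\,\cdot\,)$ to the level-$m$ expression through the reduction formula: this introduces the scalar factor for $k=m-1$ and replaces the operator part by ${B^{i_m}_{i_{m-1}}}^*B^*_{\p(i_m,\cdots,i_n)}\ol b(n,i_n)B_{\p(i_m,\cdots,i_n)}B^{i_m}_{i_{m-1}}$; the one identity that makes everything close up is the path concatenation $B_{\p(i_m,\cdots,i_n)}B^{i_m}_{i_{m-1}}=B_{\p(i_{m-1},i_m,\cdots,i_n)}$, read off directly from the definition of $B_{\p(\cdots)}$, which turns this operator into $B^*_{\p(i_{m-1},\cdots,i_n)}\ol b(n,i_n)B_{\p(i_{m-1},\cdots,i_n)}$, exactly the level-$(m-1)$ form. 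Taking $m=0$ gives \eqref{eq:reducing_computation}.

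I do not expect a genuine obstacle: the statement is an identity, the analytic input it needs (existence and block-diagonal form of the operators $\ol b(\cdot)$, together with the recursion in Lemma \ref{eq:limit_operators_properties}(ii)) is already available, and neither complete positivity nor the Schwarz inequality enters. The only step demanding care --- and the likeliest source of a slip --- is the bookkeeping of index ranges: at every layer one must check that the outer sum over all of $\Lambda$ in \eqref{4.3} collapses to a primed sum over $\L(\rho^{(k)})$ because the operator it multiplies is block-diagonal with support in $\L(\rho^{(k)})$, and that the transposition built into \eqref{4.3} is respected, so that it is always the $a_k$-slot that gets paired with $\rho^{(k)}_{i_k}$ while the other slot carries the conjugation by $M^{i_{k+1}}_{i_k}$.
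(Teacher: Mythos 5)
Your proposal is correct and follows essentially the same route as the paper: unfold the finite nested expression \eqref{3.2} from the inside out, use \eqref{4.3} together with Lemma \ref{eq:limit_operators_properties} (i)--(ii) to collapse $\mathcal E^{(n)}(a_n\otimes\ol b(n+1))$ into ${\sum_{i_n}}'\frac{\mathrm{Tr}(\rho^{(n)}_{i_n}\otimes|i_n\rangle\langle i_n|a_n)}{\mathrm{Tr}(\rho^{(n)}_{i_n})}\,\ol b(n,i_n)\otimes|i_n\rangle\langle i_n|$, and then iterate outward. The paper compresses the iteration into ``repeated application of \eqref{4.3} and Lemma \ref{eq:limit_operators_properties} (i)'', whereas you make the induction and the path-concatenation identity explicit, which is just a more detailed write-up of the same argument.
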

\begin{proof}
Recall 
\[
E_{0]}(a_{n]})=\mathcal E^{(0)}(a_0\otimes\cdots \otimes\mathcal E^{(n)}(a_n\otimes \ol b(n+1))).
\]
By definition \eqref{4.3} and Lemma \ref{eq:limit_operators_properties} (i) and (ii), we see that
\begin{eqnarray*}
&&\mathcal E^{(n)}(a_n\otimes \ol b(n+1))\\
&=&{\sum_{i_n}}'\,\frac{\mathrm{Tr}(\rho_{i_n}^{(n)}\otimes |i_n\rangle\langle i_n|a_n)}{\mathrm{Tr}(\rho_{i_n}^{(n)})} {\sum_{i_{n+1}}}'\,{B_{i_n}^{i_{n+1}}}^*\,\ol b(n+1,i_{n+1})B_{i_n}^{i_{n+1}}\otimes |i_n\rangle\langle i_n|\\
&=&{\sum_{i_n}}'\,\frac{\mathrm{Tr}(\rho_{i_n}^{(n)}\otimes |i_n\rangle\langle i_n|a_n)}{\mathrm{Tr}(\rho_{i_n}^{(n)})}\ol b(n,i_n)\otimes |i_n\rangle\langle i_n|.
\end{eqnarray*}
Now repeated application of \eqref{4.3} and Lemma \ref{eq:limit_operators_properties} (i) gives the result.
 \end{proof} 
The following proposition shows two important features of our definition. One is that for any initial state $\rho^{(0)}$, the pair          $(\rho^{(0)},(\mathcal E^{(n)})_{n\ge 0})$ is a Markov pair (see Corollary \ref{cor:Markovian}), in other words, $\rho$ in \eqref{eq:state_qmc} is a state on $\mathcal A$ and hence a QMC. The second one is that the QMCs associated with OQRWs naturally extend the classical Markov chains (see \eqref{eq:state} in Subsection \ref{subsec:CMC}).
\begin{prop}\label{prop:recovering_classical_property}
For any $x\in \mathcal B$,
\[
\rho(I\otimes  \cdots \otimes I\otimes \overset{n\mathrm{th}}{x}\otimes I\otimes \cdots)=\rho^{(n)}(x),
\]
where $\rho^{(n)}=\mathcal M^n(\rho^{(0)})$.
\end{prop}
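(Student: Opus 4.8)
The plan is to evaluate $\rho^{(0)}\big(E_{0]}(a)\big)$ directly for $a=I\otimes\cdots\otimes I\otimes\overset{n\mathrm{th}}{x}\otimes I\otimes\cdots$, using the explicit formula already supplied by Lemma \ref{lem:reducing_computation}. Observe first that this $a$ equals $I\otimes\cdots\otimes x\otimes I_{[n+1}$ and hence lies in $\mathcal A_{n]}$, so no limiting procedure is needed and we may apply Lemma \ref{lem:reducing_computation} with $a_0=\cdots=a_{n-1}=I$ and $a_n=x$. (For $n=0$ one reads $B_{\p(i_0)}$ as the identity, consistently with \eqref{3.2}.) Since every element of $\mathcal B$ is block-diagonal, write $x=\sum_j x(j)\otimes|j\rangle\langle j|$; then for $0\le k<n$ the weight $\mathrm{Tr}(\rho_{i_k}^{(k)}\otimes|i_k\rangle\langle i_k|)/\mathrm{Tr}(\rho_{i_k}^{(k)})$ equals $1$, while for $k=n$ it equals $\mathrm{Tr}(\rho_{i_n}^{(n)}x(i_n))/\mathrm{Tr}(\rho_{i_n}^{(n)})$. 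Thus
\[
E_{0]}(a)={\sum_{i_0}}'\cdots{\sum_{i_n}}'\,\frac{\mathrm{Tr}(\rho_{i_n}^{(n)}x(i_n))}{\mathrm{Tr}(\rho_{i_n}^{(n)})}\Big(B^*_{\p(i_0,\cdots,i_n)}\ol b(n,i_n)B_{\p(i_0,\cdots,i_n)}\otimes|i_0\rangle\langle i_0|\Big).
\]

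Next I would apply the normal functional $\rho^{(0)}=\sum_i\rho_i^{(0)}\otimes|i\rangle\langle i|$ term by term, which is legitimate because each summand is a positive operator, obtaining
\[
\rho(a)={\sum_{i_0}}'\cdots{\sum_{i_n}}'\,\frac{\mathrm{Tr}(\rho_{i_n}^{(n)}x(i_n))}{\mathrm{Tr}(\rho_{i_n}^{(n)})}\,\mathrm{Tr}\Big(\rho_{i_0}^{(0)}B^*_{\p(i_0,\cdots,i_n)}\ol b(n,i_n)B_{\p(i_0,\cdots,i_n)}\Big).
\]
Using cyclicity of the trace and $B_{\p(i_0,\cdots,i_n)}=B^{i_n}_{i_{n-1}}\cdots B^{i_1}_{i_0}$, the inner sum over $i_0,\dots,i_{n-1}$ of $B_{\p(i_0,\cdots,i_n)}\rho^{(0)}_{i_0}B^*_{\p(i_0,\cdots,i_n)}$ is, by the very definition of the iterated map $\mathcal M^n$ in \eqref{2.4}, exactly $\rho^{(n)}_{i_n}$. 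Here one checks that replacing each full lattice sum by the restricted sum ${\sum}'$ changes nothing: any term carrying an index $i_l\notin\Lambda(\rho^{(l)})$ vanishes, since the positive operators $B_{\p(i_0,\cdots,i_l)}\rho^{(0)}_{i_0}B^*_{\p(i_0,\cdots,i_l)}$ sum over $i_0,\dots,i_{l-1}$ to $\rho^{(l)}_{i_l}=0$ — this is precisely the estimate already used in the proof of Lemma \ref{eq:limit_operators_properties}(iii). Hence the expression collapses to ${\sum_{i_n}}'\,\dfrac{\mathrm{Tr}(\rho_{i_n}^{(n)}x(i_n))}{\mathrm{Tr}(\rho_{i_n}^{(n)})}\,\mathrm{Tr}(\rho^{(n)}_{i_n}\ol b(n,i_n))$.

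Finally I would invoke Lemma \ref{eq:limit_operators_properties}(iii), which gives $\mathrm{Tr}(\rho^{(n)}_{i_n}\ol b(n,i_n))=\mathrm{Tr}(\rho^{(n)}_{i_n})$ for $i_n\in\Lambda(\rho^{(n)})$; the ratio then cancels and we are left with $\sum_{i_n}\mathrm{Tr}(\rho^{(n)}_{i_n}x(i_n))=\mathrm{Tr}(\rho^{(n)}x)=\rho^{(n)}(x)$. The step I expect to require the most care is the middle one: making sure the passage from $\Lambda$-sums to $\Lambda(\rho^{(k)})$-sums is exact at every level $k=0,\dots,n-1$, so that the telescoping to $\rho^{(n)}_{i_n}$ is valid, and justifying the interchange of $\rho^{(0)}$ with the possibly infinite sums — both handled by positivity together with the vanishing argument borrowed from Lemma \ref{eq:limit_operators_properties}. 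Everything else reduces to cyclicity of the trace.
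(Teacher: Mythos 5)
Your proposal is correct and follows essentially the same route as the paper's own proof: apply the explicit formula of Lemma \ref{lem:reducing_computation}, pass between the restricted sums ${\sum}'$ and the full sums via the vanishing argument from Lemma \ref{eq:limit_operators_properties}, telescope the inner sums to $\rho^{(n)}_{i_n}$ using the definition of $\mathcal M$, and cancel with Lemma \ref{eq:limit_operators_properties}(iii). No gaps.
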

\begin{proof}
Using the definition \eqref{3.2}, we get 
\begin{eqnarray*}
&&\rho(I\otimes  \cdots \otimes I\otimes \overset{n\mathrm{th}}{x}\otimes I\otimes \cdots)\\
&=& \rho^{(0)}\left(\mathcal{E}^{(0)}
(I\otimes\cdots \otimes\mathcal{E}^{(n)}(x\otimes \ol b(n+1)))  \right).
\end{eqnarray*}
By Lemma \ref{lem:reducing_computation} and Lemma \ref{eq:limit_operators_properties} (iii), we have 
\begin{eqnarray*}
&&\rho^{(0)}\left(\mathcal{E}^{(0)}
(I\otimes\cdots \otimes\mathcal{E}^{(n)}(x\otimes \ol b(n+1)))  \right)\\
&=&  {\sum_{i_n}}'\,\frac{\mathrm{Tr}(\rho_{i_n}^{(n)}\otimes |i_n\rangle\langle i_n|x)}{\mathrm{Tr}(\rho_{i_n}^{(n)})}{\sum_{i_{0}}}'\cdots{\sum_{i_{n-1}}}'   \,\mathrm{Tr}\left(B_{\p(i_0,\cdots,i_n)}\rho_{i_0}^{(0)}{B^*_{\p(i_0,\cdots,i_n)}}\ol b(n,i_n) \right)\\
&=&  {\sum_{i_n}}'\,\frac{\mathrm{Tr}(\rho_{i_n}^{(n)}\otimes |i_n\rangle\langle i_n|x)}{\mathrm{Tr}(\rho_{i_n}^{(n)})}{\sum_{i_{0}}}\cdots{\sum_{i_{n-1}}}   \,\mathrm{Tr}\left(B_{\p(i_0,\cdots,i_n)}\rho_{i_0}^{(0)}{B^*_{\p(i_0,\cdots,i_n)}}\ol b(n,i_n) \right)\\
&=&  {\sum_{i_n}}'\,\frac{\mathrm{Tr}(\rho_{i_n}^{(n)}\otimes |i_n\rangle\langle i_n|x)}{\mathrm{Tr}(\rho_{i_n}^{(n)})}   \,\mathrm{Tr}\left( \rho_{i_n}^{(n)}\ol b(n,i_n) \right)\\
&=&  {\sum_{i_n}}'\,\frac{\mathrm{Tr}(\rho_{i_n}^{(n)}\otimes |i_n\rangle\langle i_n|x)}{\mathrm{Tr}(\rho_{i_n}^{(n)})}   \,\mathrm{Tr} ( \rho_{i_n}^{(n)}   )\\
 &=&{\sum_{i_n}}'\mathrm{Tr}(\rho_{i_n}^{(n)}\otimes |i_n\rangle\langle i_n|x)\\
&=&\rho^{(n)}(x).
\end{eqnarray*} 
The second and third equalities follow from the definition of OQRWs. The proof is completed.
\end{proof}
\begin{cor}\label{cor:Markovian}
The pair $(\rho^{(0)},(\mathcal E^{(n)})_{n\ge 0})$ is a Markov pair.
\end{cor}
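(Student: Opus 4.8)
The plan is to read this off directly from Proposition \ref{prop:recovering_classical_property}. By Definition \ref{def:qmc}(i), asserting that $(\rho^{(0)},(\mathcal E^{(n)})_{n\ge 0})$ is a Markov pair is exactly asserting that the positive definite functional $\rho$ defined in \eqref{eq:state_qmc} satisfies $\rho(I\otimes I\otimes\cdots)=1$, i.e. that $\rho$ is in fact a state on $\mathcal A$. So the whole task reduces to evaluating $\rho$ at the unit of $\mathcal A$.

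To do this I would apply Proposition \ref{prop:recovering_classical_property} with $x=I$ (the identity of $\mathcal B$) and, say, $n=0$. Since $I\otimes\cdots\otimes I\otimes\overset{0\mathrm{th}}{I}\otimes I\otimes\cdots$ is nothing but the identity of $\mathcal A$, the proposition gives $\rho(I\otimes I\otimes\cdots)=\rho^{(0)}(I)=\mathrm{Tr}(\rho^{(0)})$. By the normalization imposed on the initial density operator, $\mathrm{Tr}(\rho^{(0)})=\sum_i\mathrm{Tr}(\rho^{(0)}_i)=1$, which is precisely the Markovian condition. (One could equally take any $n$ and use that $\mathcal M$ is trace preserving, a consequence of \eqref{2.2}, to get $\rho^{(n)}(I)=\mathrm{Tr}(\mathcal M^n(\rho^{(0)}))=\mathrm{Tr}(\rho^{(0)})=1$; this variant also makes transparent why the sub-Markovian transition expectations still assemble into a genuine state once paired with $\rho^{(0)}$.)

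There is essentially no obstacle here: the content is all in Proposition \ref{prop:recovering_classical_property}, and the only point requiring a word of care is the bookkeeping — identifying $I\otimes\cdots\otimes I\otimes\overset{n\mathrm{th}}{I}\otimes\cdots$ with the unit of $\mathcal A$ and the functional value $\rho^{(n)}(I)$ with the trace $\mathrm{Tr}(\rho^{(n)})$ — which is already fixed by the conventions stated at the beginning of Section 4. Thus the corollary follows in one line, and with it the claim announced before Proposition \ref{prop:recovering_classical_property} that $\rho$ in \eqref{eq:state_qmc} is a (nonhomogeneous) QMC with initial state $\rho^{(0)}$ for an arbitrary initial state $\rho^{(0)}$.
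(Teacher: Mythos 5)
Your proposal is correct and coincides with the paper's own proof, which likewise deduces the corollary from Proposition \ref{prop:recovering_classical_property} by taking $x=I$ (so that $\rho(I\otimes I\otimes\cdots)=\rho^{(n)}(I)=\mathrm{Tr}(\rho^{(n)})=1$). Your additional remarks on trace preservation and the identification of the unit of $\mathcal A$ are accurate but only elaborate the same one-line argument.
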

\begin{proof} It follows from Proposition \ref{prop:recovering_classical_property} by taking $x=I$.
\end{proof}
\begin{define}\label{def:qmc_oqrw}
The pair $(\rho^{(0)},(\mathcal E^{(n)})_{n\ge 0})$, or the state
$\rho$ in \eqref{eq:state_qmc} is called the (nonhomogeneous) QMC associated with
the OQRW.
\end{define}
We remark that as will be noted in Subsection \ref{subsec:CMC},
the property in Proposition
\ref{prop:recovering_classical_property} is observed when the QMCs
are applied to recover the classical Markov chains (see \eqref{eq:state}), and this property was already observed in \cite{AK1, AK2}.

Next we shortly discuss the invariant states for the QMCs.
 \begin{define}\label{def4.3}(Invariant state)
A state (density matrix) $\omega $ on $\mathcal B $ is called invariant to the QMC if
$$\mathrm{Tr}(\omega x)=\text{Tr}(\omega \mathcal{E}^{(n)}(I\otimes x))$$
for all $x$ and $n=0,1,2,\cdots$.
 \end{define}
 This corresponds to the condition (2.3) of \cite{P}. The following proposition shows that an invariant state
 $\omega $ to the Markov
 chain of a OQRW is an invariant state (density operator) with respect to
 $\mathcal{M}$.
\begin{prop}\label{prop4.4} A state $\omega =\sum_i\omega_i\otimes |i\rangle\langle i|$ is invariant to the QMC of OQRW if and only if
$\sum_{i,j}\mathrm{Tr}_2({K_{ij}^{(n)}}^*\omega $ $\otimes I K_{ij}^{(n)})=\omega $ for all $n\ge 0$,
and in this case $\omega$ satisfies $\omega =\sum_{i,j}M^i_j\omega {M^i_j}^*$, that is, $\omega =\mathcal{M}(\omega )$. On the other hand, if $\omega =\mathcal{M}(\omega )$, the state $\omega$ is invariant to the QMC $(\rho^{(0)},(\mathcal E^{(n)})_{n\ge 0})$ with $\rho^{(0)}=\omega$. In this case we have $\mathcal E^{(n)}=\mathcal E^{(0)}$ for all $n\ge 0$, i.e., the QMC is homogeneous.
\end{prop}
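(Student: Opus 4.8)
The plan is to dualise the invariance condition of Definition~\ref{def4.3} into a single operator identity and then reconcile the restricted summation that appears in it with the full one defining $\mathcal M$ by a trace-preservation argument.

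First I would unwind $\mathrm{Tr}\big(\omega\,\mathcal E^{(n)}(I\otimes x)\big)$. By Definition~\ref{def4.2}, $\mathcal E^{(n)}(I\otimes x)=\sum_{i,j}\mathrm{Tr}_2\big(K_{ij}^{(n)}(x\otimes I){K_{ij}^{(n)}}^*\big)$, so using the two adjointness identities $\mathrm{Tr}\big(a\,\mathrm{Tr}_2(D)\big)=\mathrm{Tr}\big((a\otimes I)D\big)$ and $\mathrm{Tr}\big(D(a\otimes I)\big)=\mathrm{Tr}\big(\mathrm{Tr}_2(D)\,a\big)$ for $D\in\mathcal B\otimes\mathcal B$, together with cyclicity of the trace, one gets
\[
\mathrm{Tr}\big(\omega\,\mathcal E^{(n)}(I\otimes x)\big)=\mathrm{Tr}\left(\Big(\sum_{i,j}\mathrm{Tr}_2\big({K_{ij}^{(n)}}^*(\omega\otimes I)K_{ij}^{(n)}\big)\Big)x\right).
\]
Since the pairing of $\mathcal B\subseteq\mathcal B(\mathcal H\otimes\mathcal K)$ with the trace-class operators is nondegenerate, the equality $\mathrm{Tr}(\omega x)=\mathrm{Tr}\big(\omega\,\mathcal E^{(n)}(I\otimes x)\big)$ holding for all $x$ and $n$ is precisely $\sum_{i,j}\mathrm{Tr}_2\big({K_{ij}^{(n)}}^*(\omega\otimes I)K_{ij}^{(n)}\big)=\omega$ for all $n$, which gives the stated equivalence.

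Next I would evaluate that operator. Writing $K_{ij}^{(n)}={M^i_j}^*\otimes A_{ij}^{(n)}$ yields ${K_{ij}^{(n)}}^*(\omega\otimes I)K_{ij}^{(n)}=(M^i_j\,\omega\,{M^i_j}^*)\otimes\big({A_{ij}^{(n)}}^*A_{ij}^{(n)}\big)$, and, as in the proof of Proposition~\ref{prop4.1}, $\mathrm{Tr}\big({A_{ij}^{(n)}}^*A_{ij}^{(n)}\big)=1$ whenever $\rho^{(n)}_j\neq0$ while $K_{ij}^{(n)}=0$ otherwise; hence $\sum_{i,j}\mathrm{Tr}_2\big({K_{ij}^{(n)}}^*(\omega\otimes I)K_{ij}^{(n)}\big)=\sum_{j\in\Lambda(\rho^{(n)})}\sum_i M^i_j\,\omega\,{M^i_j}^*$. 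To upgrade this partial sum to $\mathcal M(\omega)=\sum_{i,j}M^i_j\,\omega\,{M^i_j}^*$, take traces in the invariance identity: since $\mathrm{Tr}\big(\sum_i M^i_j\,\omega\,{M^i_j}^*\big)=\mathrm{Tr}\big((I_{\mathcal H}\otimes|j\rangle\langle j|)\omega\big)=\mathrm{Tr}(\omega_j)$ by \eqref{2.1}, invariance forces $\sum_{j\in\Lambda(\rho^{(n)})}\mathrm{Tr}(\omega_j)=\mathrm{Tr}(\omega)=\sum_{j\in\Lambda}\mathrm{Tr}(\omega_j)$, so $\omega_j=0$, and therefore $M^i_j\,\omega\,{M^i_j}^*=0$, for every $j\notin\Lambda(\rho^{(n)})$. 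Thus the partial sum already equals $\mathcal M(\omega)$, giving $\mathcal M(\omega)=\omega$. This trace/positivity bookkeeping, needed to match the index set $\Lambda(\rho^{(n)})$ with the full lattice $\Lambda$, is the one delicate point; the rest is algebra of partial traces.

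For the converse, suppose $\omega=\mathcal M(\omega)$ and put $\rho^{(0)}=\omega$. Then $\rho^{(n)}=\mathcal M^n(\omega)=\omega$ for all $n$ by induction, so $A_{ij}^{(n)}$, $K_{ij}^{(n)}$ and hence $\mathcal E^{(n)}$ are independent of $n$; thus $\mathcal E^{(n)}=\mathcal E^{(0)}$ and the QMC is homogeneous. Invariance of $\omega$ then follows from the previous two steps with $\rho^{(n)}_j=\omega_j$: the operator $\sum_{i,j}\mathrm{Tr}_2\big({K_{ij}^{(0)}}^*(\omega\otimes I)K_{ij}^{(0)}\big)$ equals $\sum_{j\in\Lambda(\omega)}\sum_i M^i_j\,\omega\,{M^i_j}^*=\sum_{i,j}M^i_j\,\omega\,{M^i_j}^*=\mathcal M(\omega)=\omega$, because $\omega_j=0$ for $j\notin\Lambda(\omega)$; the equivalence from the first step then yields invariance.
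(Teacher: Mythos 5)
Your proposal is correct and follows essentially the same route as the paper: dualize the invariance condition through the partial-trace adjunction and cyclicity to get $\sum_{i,j}\mathrm{Tr}_2\big({K_{ij}^{(n)}}^*(\omega\otimes I)K_{ij}^{(n)}\big)=\omega$, evaluate that operator as $\sum_{j\in\Lambda(\rho^{(n)})}\sum_i M^i_j\,\omega\,{M^i_j}^*$, use the trace/positivity argument to show $\omega_j=0$ off the support of $\rho^{(n)}$ so the restricted sum equals $\mathcal M(\omega)$, and handle the converse by noting $\rho^{(n)}=\omega$ forces homogeneity. The only cosmetic difference is that you compute $\mathrm{Tr}\big({A_{ij}^{(n)}}^*A_{ij}^{(n)}\big)=1$ explicitly where the paper just states the resulting identity.
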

\begin{proof} We have
\begin{eqnarray*}
\text{Tr}(\omega \mathcal{E}^{(n)}(I\otimes x))&=& \sum_{i,j}\text{Tr}\big(\text{Tr}_2((\omega \otimes I)(K_{ij}^{(n)}(x\otimes I){K_{ij}^{(n)}}^*))\big)\\
&=&\sum_{i,j}\tilde{\text{Tr}}\big((\omega \otimes I)(K_{ij}^{(n)}(x\otimes I){K_{ij}^{(n)}}^*)\big)\\
&=&\sum_{i,j}\text{Tr}\big(\text{Tr}_2\big({K_{ij}^{(n)}}^*(\omega \otimes I)K_{ij}^{(n)}(x\otimes I)\big)\big)\\
&=&\sum_{i,j}\text{Tr}\big(\text{Tr}_2\big({K_{ij}^{(n)}}^*(\omega \otimes
I)K_{ij}^{(n)}\big)x\big).
\end{eqnarray*}
Thus $\text{Tr}(\omega \mathcal{E}^{(n)}(I\otimes x))=\text{Tr}(\omega x)$ for
all $x$ if and only if
$\sum_{i,j}\text{Tr}_2({K_{ij}^{(n)}}^*\omega \otimes I
K_{ij}^{(n)})=\omega $. By direct calculation, we have
\begin{eqnarray*}
\sum_{i,j}\text{Tr}_2({K_{ij}^{(n)}}^*\omega \otimes I K_{ij}^{(n)})&=&\sum_{j:\rho_j^{(n)}\neq 0}\sum_{i}M^i_j\omega {M^i_j}^*\\
&=&\sum_{j:\rho_j^{(n)}\neq 0}\sum_{i}B_j^i\omega_j{B_j^i}^*\otimes |i\rangle\langle i|.
\end{eqnarray*}
Therefore $\sum_{i,j}\text{Tr}_2({K_{ij}^{(n)}}^*\omega \otimes I
K_{ij}^{(n)})=\omega $ if and only if 
\begin{equation}\label{eq:support}
\sum_i\omega_i\otimes |i\rangle\langle i|=\sum_{i}\left(\sum_{j:\rho_j^{(n)}\neq 0}B_j^i\omega_j{B_j^i}^*\right)\otimes |i\rangle\langle i|.
\end{equation}
By taking trace to both sides of the above equation we get
\begin{eqnarray*}
1&=&\sum_i\sum_{j:\rho_j^{(n)}\neq 0}\mathrm{Tr}\left(B_j^i\omega_j{B_j^i}^*\right)\\
&=&\sum_{j:\rho_j^{(n)}\neq 0} \mathrm{Tr}(\omega_j).
\end{eqnarray*}
This means that $\omega_i=0$ if $\rho_i^{(n)}=0$ (for all $n\ge 0$). Thus  \eqref{eq:support} is written as 
\[
\sum_i\omega_i\otimes |i\rangle\langle i|=\sum_i\sum_j B_j^i\omega_j{B_j^i}^*\otimes |i\rangle\langle i|=\sum_i\mathcal M(\omega)_i\otimes |i\rangle\langle i|.
\]
We have therefore $\omega=\mathcal M(\omega)$. 

Now conversely suppose $\omega=\mathcal M(\omega)$ and define a Markov pair $(\rho^{(0)},(\mathcal E^{(n)})_{n\ge 0})$ with $\rho^{(0)}=\omega$. Then, since $\rho^{(n)}=\rho^{(0)}=\omega$ for all $n\ge 0$, it is a homogeneous QMC. Moreover, by \eqref{4.3}
\begin{eqnarray*}
\mathrm{Tr}(\omega \mathcal E^{(n)}(I\otimes x))&=&\sum_{j:\omega_j\neq 0}\sum_i\mathrm{Tr}(\omega {M_j^i}^*x{M_j^i})\\
&=&\sum_{j:\omega_j\neq 0}\sum_i\mathrm{Tr}(B_j^i\omega_j{B_j^i}^*\otimes |i\rangle\langle i|x)\\
&=&\sum_{j}\sum_i\mathrm{Tr}(B_j^i\omega_j{B_j^i}^*\otimes |i\rangle\langle i|x)\\
&=&\mathrm{Tr}(\mathcal M(\omega)x)=\mathrm{Tr}(\omega x). 
\end{eqnarray*}
Therefore, $\omega$ is invariant to the QMC $(\rho^{(0)},(\mathcal E^{(n)})_{n\ge 0})$ with $\rho^{(0)}=\omega$.
\end{proof}

\subsection{Reducibility and irreducibility of QMCs for OQRWs}

Recall the definition of reducibility and irreducibility of QMCs
in Definition \ref{def4.5} with the projections in
\eqref{eq:projection_prototype} and \eqref{eq:projection_space}.
When we consider the reducibility and irreducibility problems for
QMCs associated with OQRWs, the possible reducing projections shall be of the form:
\begin{equation}\label{eq:projections_for_oqrw}
p_{[n}=I\otimes \cdots \otimes I\otimes \overset{n\text{th}}p\otimes p\otimes\cdots  \in \mathcal A\text{ with }
 p=\sum_jp(j)\otimes |j\rangle\langle j|\in \mathcal B,
\end{equation}
where $p(j)$'s are projections on $\mathcal H$. Then we define
\begin{equation}\label{eq:projection_space_for_oqrw}
\mathcal P_0:=\{p_{[n}:\,p_{[n}, \,\text{ a projection of the form
\eqref{eq:projections_for_oqrw}},\,\, n\in \mathbb Z_+\}.
\end{equation}
We say that a QMC associated with an OQRW is reducible if there exists a nontrivial projection $p_{[n_0}\in \mathcal P_0$ satisfying \eqref{eq:reducing} in Definition \ref{def4.5}. Otherwise it is called irreducible.

Let $0\le m\le n$ and $0\le n_0\le n$. Consider $a=a_0\otimes a_1\otimes\cdots\otimes a_m\otimes I_{[m+1}\in\mathcal{A}_{m]}$
and a projection $p_{[n_0,n]}:=I\otimes\cdots \otimes I\otimes \overset{n_0\text{th}}p\otimes\cdots p\otimes I_{[n+1}$. Notice that $p_{[n_0}=\lim_{n\to \infty}p_{[n_0,n]}$. In order to compute $E_{0]}(p_{[n_0,n]}ap_{[n_0,n]})$, we let for the time being
\begin{equation}\label{eq:projection_labeling}
p_l:=\begin{cases} I,&0\le l\le n_0-1,\\
p,&n_0\le l\le n. \end{cases}
\end{equation}
By Lemma \ref{lem:reducing_computation} we get for $a=a_0\otimes a_1\otimes\cdots\otimes a_m\otimes I_{[m+1}$, $m\le n$,
\begin{eqnarray}\label{4.5}
&&E_{0]}(p_{[n_0,n]}ap_{[n_0,n]})\\
&&={\sum_{i_0}}'\cdots{\sum_{i_{n}}}'\prod_{k=0}^m\frac{\text{Tr}(\rho_{i_{k}}^{(k)}p_k(i_k)a_k(i_k)p_k(i_k))}{\text{Tr}(\rho_{i_{k}}^{(k)})}
\prod_{k=m+1}^n\frac{\text{Tr}(\rho_{i_{k}}^{(k)}p_k(i_{k}))}{\text{Tr}(\rho_{i_{k}}^{(k)})}\nonumber\\
&&\times \left(B^*_{\p(i_0,\cdots,i_{n})} \ol b(n,i_n) B_{\p(i_0,\cdots,i_{n})}\otimes|i_0\rangle\langle i_0|\right),\nonumber
\end{eqnarray}
where $p_k$'s are given by \eqref{eq:projection_labeling}.
In particular, we have
\begin{eqnarray}\label{4.6}
 &&E_{0]}(p_{[n_0,n]})\\
 &&={\sum_{i_0}}'\cdots{\sum_{i_{n}}}' 
\prod_{k=n_0}^n\frac{\text{Tr}(\rho_{i_{k}}^{(k)}p_k(i_{k}))}{\text{Tr}(\rho_{i_{k}}^{(k)})} \left(B^*_{\p(i_0,\cdots,i_{n})} \ol b(n,i_n) B_{\p(i_0,\cdots,i_{n})}\otimes|i_0\rangle\langle i_0|\right).\nonumber
\end{eqnarray}
\begin{lem}\label{lem4.6}
For $p_{[n_0}\in \mathcal P_0$, one has
\begin{equation}\label{4.7}
I-p_{[n_0}=\sum_{n\ge n_0}^\infty I\otimes \cdots I\otimes \overset{n_0\text{th}}p\otimes\cdots\otimes p\otimes \overset{n\text{th}}{p^\perp}\otimes I_{[n+1}
\end{equation}
where $p^\bot=I-p$.
\end{lem}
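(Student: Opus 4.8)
The plan is to prove the identity first for the finite truncations and then pass to the strong-operator limit. Throughout I fix the projection $p$ and the index $n_0$, and recall the projections $p_{[n_0,n]}=I\otimes\cdots\otimes I\otimes\overset{n_0\text{th}}{p}\otimes\cdots\otimes p\otimes I_{[n+1}$ (with $p$ in the slots $n_0,\dots,n$) introduced just before the lemma, which form a decreasing family, $p_{[n_0,n+1]}\le p_{[n_0,n]}$, with $p_{[n_0}$ equal to their strong-operator limit.

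For $n\ge n_0$ set $q_n:=I\otimes\cdots\otimes I\otimes\overset{n_0\text{th}}{p}\otimes\cdots\otimes p\otimes\overset{n\text{th}}{p^\perp}\otimes I_{[n+1}$, i.e.\ $p$ in the slots $n_0,\dots,n-1$, $p^\perp=I-p$ in slot $n$, and $I$ elsewhere, with the understanding that for $n=n_0$ the $p$-block is empty, so $q_{n_0}=I\otimes\cdots\otimes I\otimes\overset{n_0\text{th}}{p^\perp}\otimes I_{[n_0+1}$. Each $q_n$ is a projection, and for $n<n'$ one has $q_nq_{n'}=0$, since $q_n$ carries $p^\perp$ in slot $n$ while $q_{n'}$ carries $p$ in that slot and $p^\perp p=0$. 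Hence $\{q_n\}_{n\ge n_0}$ is a family of mutually orthogonal projections, so $\sum_{n=n_0}^{\infty}q_n$ converges strongly to a projection of $\mathcal A$; this is the right-hand side of \eqref{4.7}.

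Next I would prove the telescoping identity $I-p_{[n_0,N]}=\sum_{n=n_0}^{N}q_n$ for every $N\ge n_0$ by induction on $N$. The base case $N=n_0$ is $I-p_{[n_0,n_0]}=I\otimes\cdots\otimes I\otimes\overset{n_0\text{th}}{p^\perp}\otimes I_{[n_0+1}=q_{n_0}$. For the inductive step, $I-p_{[n_0,N+1]}=(I-p_{[n_0,N]})+(p_{[n_0,N]}-p_{[n_0,N+1]})$, and the difference $p_{[n_0,N]}-p_{[n_0,N+1]}$ carries $p$ in the slots $n_0,\dots,N$, $I-p=p^\perp$ in slot $N+1$, and $I$ afterwards, so it equals $q_{N+1}$; combining with the induction hypothesis completes the step.

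Finally I would let $N\to\infty$. The left-hand side $I-p_{[n_0,N]}$ increases strongly to $I-p_{[n_0}$ because $p_{[n_0,N]}$ decreases strongly to $p_{[n_0}$, and the right-hand side converges strongly to $\sum_{n=n_0}^{\infty}q_n$ by the orthogonality noted above; identifying the two limits gives exactly \eqref{4.7}. There is no genuine obstacle here: the only points requiring a little care are the reading of the degenerate term $q_{n_0}$ and the fact that all the limits involved are taken in the strong operator topology of the von Neumann algebra $\mathcal A$, where bounded monotone nets of self-adjoint elements converge.
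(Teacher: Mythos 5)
Your proof is correct and follows essentially the same route as the paper: both establish the finite telescoping identity $I-p_{[n_0,N]}=\sum_{n=n_0}^{N}q_n$ and then pass to the strong-operator limit as $N\to\infty$. The extra remarks on the mutual orthogonality of the $q_n$ and on monotone convergence are sound but do not change the argument.
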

\begin{proof}
Let us adopt the notations in \eqref{eq:projection_labeling}. We have
$$p_0^\bot\otimes I_{[1}+p_0\otimes p_1^\bot\otimes I_{[2}=I-p_0\otimes p_1\otimes I_{[2}.$$
Continuing this procedure, we have
$$\sum_{k\ge0}^n p_0\otimes p_1\otimes\cdots\otimes p_{k-1}\otimes p_k^\bot\otimes I_{[k+1}
=I-p_0\otimes p_1\otimes\cdots\otimes  p_n\otimes I_{[n+1}.$$
So taking the limit $n\rightarrow\infty$, and returning back the notations, we get (\ref{4.7}).
\end{proof}
\begin{prop}\label{prop4.7}
Let $p_{[n_0}\in \mathcal P_0$. Then,
$E_{0]}(I-p_{[n_0})=0$ if and only if $\rho^{(n)}_jp(j)=\rho^{(n)}_j$ for all $j\in \Lambda$ and $n\ge n_0$.
\end{prop}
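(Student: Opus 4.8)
The plan is to use the explicit formula \eqref{4.6} for $E_{0]}(p_{[n_0,n]})$ together with the decomposition of $I-p_{[n_0}$ from Lemma \ref{lem4.6}. First I would establish the ``if'' direction: assume $\rho^{(n)}_jp(j)=\rho^{(n)}_j$ for all $j\in\Lambda$ and $n\ge n_0$. By Lemma \ref{lem4.6}, $I-p_{[n_0}$ is a (strongly convergent) sum of terms of the form $q_{[n_0,n]}:=I\otimes\cdots\otimes p\otimes\cdots\otimes p\otimes\overset{n\text{th}}{p^\perp}\otimes I_{[n+1}$ with $n_0\le n$, i.e.\ a projection of the shape \eqref{4.5}/\eqref{4.6} where the last factor $p_n$ is replaced by $p^\perp$. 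Applying the computation \eqref{4.6} (with $a=I$ and $p_n$ swapped for $p^\perp$), each such term produces the factor $\mathrm{Tr}(\rho^{(n)}_{i_n}p^\perp(i_n))/\mathrm{Tr}(\rho^{(n)}_{i_n})$. Since $i_n$ ranges only over $\Lambda(\rho^{(n)})$ and the hypothesis gives $\rho^{(n)}_{i_n}p^\perp(i_n)=\rho^{(n)}_{i_n}-\rho^{(n)}_{i_n}p(i_n)=0$, that trace vanishes, so $E_{0]}(q_{[n_0,n]})=0$ for every $n\ge n_0$; summing (using positivity and continuity of $E_{0]}$, which is completely positive and sub-Markovian hence norm-bounded) yields $E_{0]}(I-p_{[n_0})=0$.

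For the ``only if'' direction, suppose $E_{0]}(I-p_{[n_0})=0$. Since all terms $E_{0]}(q_{[n_0,n]})$ in the sum are nonnegative operators summing to $0$, each is $0$; in particular $E_{0]}(q_{[n_0,n_0]})=0$, where $q_{[n_0,n_0]}=I\otimes\cdots\otimes I\otimes\overset{n_0\text{th}}{p^\perp}\otimes I_{[n_0+1}$. Apply \eqref{4.6} with the single factor $p_{n_0}=p^\perp$: this gives
\[
0=E_{0]}(q_{[n_0,n_0]})={\sum_{i_0}}'\cdots{\sum_{i_{n_0}}}'\,\frac{\mathrm{Tr}(\rho^{(n_0)}_{i_{n_0}}p^\perp(i_{n_0}))}{\mathrm{Tr}(\rho^{(n_0)}_{i_{n_0}})}\Big(B^*_{\p(i_0,\cdots,i_{n_0})}\ol b(n_0,i_{n_0})B_{\p(i_0,\cdots,i_{n_0})}\otimes|i_0\rangle\langle i_0|\Big).
\]
All summands are nonnegative operators (the coefficient is $\ge 0$, and $B^*(\cdot)B\otimes|i_0\rangle\langle i_0|\ge 0$ since $\ol b(n_0,i_{n_0})\ge 0$), so each must vanish. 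Now I would evaluate the state $\rho^{(0)}$ on the $i_0$-block and use that $\ol b(n_0,i_{n_0})$ is strictly positive (Lemma \ref{eq:limit_operators_properties}(i)) together with the identity $\sum_{i_0,\dots,i_{n_0-1}}\mathrm{Tr}(B_{\p(i_0,\cdots,i_{n_0})}\rho^{(0)}_{i_0}B^*_{\p(\cdots)}\,\ol b(n_0,i_{n_0}))=\mathrm{Tr}(\rho^{(n_0)}_{i_{n_0}})$ — exactly the chain of equalities in the proof of Proposition \ref{prop:recovering_classical_property}, specialized to $x=I$ restricted to a single site $i_{n_0}$ — to conclude that $\mathrm{Tr}(\rho^{(n_0)}_{i_{n_0}}p^\perp(i_{n_0}))/\mathrm{Tr}(\rho^{(n_0)}_{i_{n_0}})\cdot\mathrm{Tr}(\rho^{(n_0)}_{i_{n_0}})=\mathrm{Tr}(\rho^{(n_0)}_{i_{n_0}}p^\perp(i_{n_0}))=0$ for each $i_{n_0}\in\Lambda(\rho^{(n_0)})$. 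Since $\rho^{(n_0)}_{i_{n_0}}\ge 0$ and $p^\perp(i_{n_0})$ is a projection, $\mathrm{Tr}(\rho^{(n_0)}_{i_{n_0}}p^\perp(i_{n_0}))=0$ forces $\rho^{(n_0)}_{i_{n_0}}p^\perp(i_{n_0})=0$, i.e.\ $\rho^{(n_0)}_j p(j)=\rho^{(n_0)}_j$ for $j\in\Lambda(\rho^{(n_0)})$; for $j\notin\Lambda(\rho^{(n_0)})$ the identity $\rho^{(n_0)}_j p(j)=\rho^{(n_0)}_j=0$ is automatic. Running the same argument with $q_{[n_0,n]}$ for each $n>n_0$ — where the hypothesis just obtained propagates the relation forward along the walk — gives $\rho^{(n)}_jp(j)=\rho^{(n)}_j$ for all $n\ge n_0$.

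The main obstacle is the ``only if'' direction: extracting the pointwise relation $\rho^{(n)}_jp(j)=\rho^{(n)}_j$ from the vanishing of a weighted operator sum. The point is that all weights and all operator summands are individually nonnegative, so vanishing of the sum forces termwise vanishing; then one must pair with the initial state $\rho^{(0)}$ and invoke strict positivity of $\ol b(n,i_n)$ to kill the operator part and isolate the scalar weight $\mathrm{Tr}(\rho^{(n)}_{i_n}p^\perp(i_n))$. The slightly delicate bookkeeping is that for $n>n_0$ one does not get the relation ``for free'' from the $n$-th term alone; rather, one should argue inductively, or equivalently observe that once $\rho^{(n_0)}_jp(j)=\rho^{(n_0)}_j$ holds, property \eqref{2.4} of the OQRW ($\rho^{(n+1)}_i=\sum_jB^i_j\rho^{(n)}_j{B^i_j}^*$) together with, e.g., $B^i_jp(j)=p(i)B^i_j$ on the relevant supports (which one deduces along the way from the $n$-th term) propagates it, so the family of conditions for $n\ge n_0$ is really controlled by the structure at level $n_0$ plus the dynamics.
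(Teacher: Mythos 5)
Your proposal is correct and follows essentially the same route as the paper: decompose $I-p_{[n_0}$ via Lemma \ref{lem4.6}, use positivity to force termwise vanishing, pair with $\rho^{(0)}$ and invoke $\mathrm{Tr}(\rho^{(n)}_j\,\ol b(n,j))=\mathrm{Tr}(\rho^{(n)}_j)>0$ to extract $\mathrm{Tr}(\rho^{(n)}_jp^\perp(j))=0$, and then induct on $n\ge n_0$ exactly as in your primary suggestion (once $\mathrm{Tr}(\rho^{(k)}_jp(j))=\mathrm{Tr}(\rho^{(k)}_j)$ for $n_0\le k<n$, the intermediate weights in the $n$-th term of the decomposition all equal $1$, reducing it to the $n_0$-type computation at level $n$). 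The only blemish is your parenthetical ``equivalent'' propagation via $B^i_jp(j)=p(i)B^i_j$, which is neither justified nor needed here; the inductive use of the $n$-th vanishing term is the correct (and the paper's) mechanism.
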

\begin{proof}
If $E_{0]}(I-p_{[n_0})=0$ then by \eqref{4.7} we have
$$\sum_{n\ge n_0}^\infty E_{0]}(I\otimes \cdots I\otimes \overset{n_0\text{th}}p\otimes\cdots\otimes p\otimes \overset{n\text{th}}{p^\perp}\otimes I_{[n+1})=0.$$
Therefore, by Lemma \ref{lem:reducing_computation}, we have
for $n\ge n_0$,
\begin{eqnarray}\label{eq:vanishing}
&&E_{0]}(I\otimes \cdots I\otimes \overset{n_0\text{th}}p\otimes\cdots\otimes p\otimes \overset{n\text{th}}{p^\perp}\otimes I_{[n+1})\\
&&={\sum_{i_0}}'\cdots{\sum_{i_n}}'\prod_{k=n_0}^{n-1}
 \frac{\text{Tr}(\rho_{i_{k}}^{(k)}p(i_{k}))}{\text{Tr}(\rho_{i_{k}}^{(k)})}
 \frac{\text{Tr}(\rho_{i_{n}}^{(n)}p^\bot(i_{n}))}{\text{Tr}(\rho_{i_{n}}^{(n)})}
 \Big(B_{\pi(i_0,\cdots,i_{n})}^*\ol b(n,i_n) B_{\pi(i_0,\cdots,i_{n})}\nonumber\\
 &&\hskip 3true cm\otimes|i_0\rangle\langle i_0|\Big)=0.\nonumber
\end{eqnarray}
From this, we claim that  $\text{Tr}(\rho^{(n)}_jp(j)^\bot)=0$ for all $n\ge n_0$ and $j\in \Lambda(\rho^{(n)})$. In fact, first we see that
\begin{eqnarray*}
0&=&E_{0]}(I\otimes \cdots I\otimes \overset{n_0\text{th}}{p^\perp}\otimes   I_{[n_0+1})\\
&=& {\sum_{i_{n_0}}}'\, 
 \frac{\text{Tr}(\rho_{i_{n_0}}^{(n_0)}p^\bot(i_{n_0}))}{\text{Tr}(\rho_{i_{n_0}}^{(n_0)})}
{\sum_{i_0}}'\cdots{\sum_{i_{n_0-1}}}'\, \Big(B_{\pi(i_0,\cdots,i_{n_0})}^*\ol b(n_0,i_{n_0}) B_{\pi(i_0,\cdots,i_{n_0})} \otimes|i_0\rangle\langle i_0|\Big).
\end{eqnarray*}
Since 
\begin{eqnarray*}
&&\mathrm{Tr}\Big(\rho^{(0)}{\sum_{i_0}}'\cdots{\sum_{i_{n_0-1}}}'\, \Big(B_{\pi(i_0,\cdots,i_{n_0})}^*\ol b(n_0,i_{n_0}) B_{\pi(i_0,\cdots,i_{n_0})} \otimes|i_0\rangle\langle i_0|\Big)\Big)\\
&=&\mathrm{Tr}\left(\rho^{(n_0)}_{i_{n_0}}\ol b(n_0,i_{n_0})\right)\\
&=&\mathrm{Tr}\left(\rho^{(n_0)}_{i_{n_0}}\right)>0,
\end{eqnarray*}
which follows by Lemma \ref{lem:reducing_computation} (iii), the operator 
\[
{\sum_{i_0}}'\cdots{\sum_{i_{n_0-1}}}'\, \Big(B_{\pi(i_0,\cdots,i_{n_0})}^*\ol b(n_0,i_{n_0}) B_{\pi(i_0,\cdots,i_{n_0})} \otimes|i_0\rangle\langle i_0|\Big)
\]
is positive. Thus we conclude that $\text{Tr}(\rho^{(n_0)}_jp(j)^\bot)=0$ for $j\in \Lambda(\rho^{(n_0)})$. By induction and repeated use of \eqref{eq:vanishing} proves the claim. Now, since $\text{Tr}(\rho^{(n)}_jp(j)^\bot)=\text{Tr}(p(j)^\bot\rho^{(n)}_jp(j)^\bot)\ge0$, $p(j)^\bot\rho^{(n)}_jp(j)^\bot=0$ and so $\rho^{(n)}_jp(j)^\bot=0$, or $\rho^{(n)}_jp(j)=\rho^{(n)}_j$ for all $n\ge n_0$ and $j\in \Lambda(\rho^{(n)})$, and hence for all $j\in \Lambda$.

On the other hand, if $\rho^{(n)}_jp(j)=\rho^{(n)}_j$ for all $n\ge n_0$ and $j\in \Lambda$, we get from (\ref{4.6}) that
\[
 E_{0]}(I\otimes I\otimes\cdots)=E_{0]}(p_{[n_0,n]})={\sum_{i_0}}'\cdots{\sum_{i_{n}}}' 
  \left(B^*_{\p(i_0,\cdots,i_{n})} \ol b(n,i_n) B_{\p(i_0,\cdots,i_{n})}\otimes|i_0\rangle\langle i_0|\right). 
\]
Therefore, we have $E_{0]}(I-p_{[n_0,n]}) =0$. Taking the limit $n\to \infty$, we get $E_{0]}(I-p_{[n_0})=0$. 
The proof is completed.
\end{proof}
\begin{thm}\label{thm4.8}
The QMC associated with an OQRW is reducible with a reducing projection $p_{[n_0}\in \mathcal P_0$ if and only if  $\rho^{(n)}_jp(j)=\rho^{(n)}_j$
 for all $j\in \Lambda$ and $n\ge n_0$.
\end{thm}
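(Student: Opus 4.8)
The plan is to derive this theorem as a short corollary of two results already in hand, Theorem~\ref{thm4.5-1} and Proposition~\ref{prop4.7}; the substantive computation has already been carried out in the explicit formula \eqref{4.6} and in Lemma~\ref{lem:reducing_computation}, so what remains is essentially bookkeeping.

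First I would record that, for a \emph{fixed} projection $p_{[n_0}\in\mathcal P_0$, the reducing condition \eqref{eq:reducing} holds for all $a\in\mathcal A$ if and only if $E_{0]}(I-p_{[n_0})=0$. The forward implication of this equivalence is just the Schwarz-inequality argument used to prove Theorem~\ref{thm4.5-1}: writing $p:=p_{[n_0}$ and $p^\bot=I-p$, the hypothesis $E_{0]}(p^\bot)=0$ together with $E_{0]}(b)^*E_{0]}(b)\le E_{0]}(b^*b)$ forces $E_{0]}(pap^\bot)=E_{0]}(p^\bot ap)=E_{0]}(p^\bot ap^\bot)=0$ for every $a$, hence $E_{0]}(a)=E_{0]}((p+p^\bot)a(p+p^\bot))=E_{0]}(pap)$. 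For the converse, one puts $a=I$ in \eqref{eq:reducing}: since $p_{[n_0}$ is a projection, $p_{[n_0}Ip_{[n_0}=p_{[n_0}$, so $E_{0]}(p_{[n_0})=E_{0]}(I)$, and linearity of $E_{0]}$ gives $E_{0]}(I-p_{[n_0})=0$. (Note this makes precise that the ``there exists'' phrasing of Theorem~\ref{thm4.5-1} is really an assertion about each individual projection.)

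Second, I would simply invoke Proposition~\ref{prop4.7}, which asserts that $E_{0]}(I-p_{[n_0})=0$ if and only if $\rho^{(n)}_jp(j)=\rho^{(n)}_j$ for all $j\in\Lambda$ and all $n\ge n_0$. Chaining the two equivalences---$p_{[n_0}$ is a reducing projection $\iff$ $E_{0]}(I-p_{[n_0})=0$ $\iff$ the support condition on $\{\rho^{(n)}_j\}$---yields the theorem. I do not expect a genuine obstacle at this stage: all the analytic content (the convergence of the approximating operators, the block-diagonal structure of $\ol b(n)$, and the identity $\mathrm{Tr}(\rho^{(n)}_j\ol b(n,j))=\mathrm{Tr}(\rho^{(n)}_j)$ that makes the relevant positive operator nonzero) is already absorbed into Lemma~\ref{eq:limit_operators_properties}, Lemma~\ref{lem:reducing_computation} and Proposition~\ref{prop4.7}; the only care needed here is to keep the roles of $p_{[n_0}$ and $p_{[n_0,n]}$ straight and to take the limit $n\to\infty$ as in Lemma~\ref{lem4.6}.
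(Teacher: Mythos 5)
Your proof is correct and is essentially the paper's own argument: the paper proves Theorem~\ref{thm4.8} in one line by combining Theorem~\ref{thm4.5-1} with Proposition~\ref{prop4.7}, exactly as you do. Your added remark that the equivalence in Theorem~\ref{thm4.5-1} really holds projection-by-projection (so that the ``with a reducing projection $p_{[n_0}$'' phrasing is justified) is a sensible clarification but not a deviation from the paper's route.
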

\begin{proof}
The proof follows from Theorem \ref{thm4.5-1} and Proposition \ref{prop4.7}.
\end{proof}
\begin{thm}\label{thm4.9}
Suppose that $h$ is a nontrivial projection on $\mathcal{H}$ such
that 
$$hB_\pi=B_\pi$$
for any path $\pi\in\mathcal{P}(i,j)$ for all $i,j\in \Lambda$. Then the QMC is reducible.
\end{thm}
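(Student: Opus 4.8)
The plan is to produce an explicit reducing projection and then appeal to Theorem~\ref{thm4.8}. The natural candidate is to take $p(j)=h$ for every $j\in\Lambda$ and to set $p:=\sum_{j\in\Lambda}h\otimes|j\rangle\langle j|=h\otimes I_{\mathcal K}$, together with the associated projection $p_{[1}$, i.e.\ with $n_0=1$. First I would verify that $p$ is admissible: it is of the form \eqref{eq:projections_for_oqrw}, it is self-adjoint and idempotent because $h$ is, and it is nontrivial since $h$ is a nontrivial projection on $\mathcal H$, so $h\otimes I_{\mathcal K}$ is neither $0$ nor $I$. When $\Lambda$ is infinite one should note in addition that $h\otimes I_{\mathcal K}$ still belongs to the von Neumann algebra $\mathcal B$, being the strong limit of the finite truncations $\sum_{j\in F}h\otimes|j\rangle\langle j|\in\mathcal B_0$ as the finite set $F$ increases to $\Lambda$; hence $p_{[1}\in\mathcal P_0$.

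The key point is the computation $\rho^{(n)}_j h=\rho^{(n)}_j$ for every $j\in\Lambda$ and every $n\ge 1$. For the length-one path $\pi(k,j)\in\mathcal P(k,j)$ one has $B_{\pi(k,j)}=B^j_k$, so the hypothesis $hB_\pi=B_\pi$ specializes to $hB^j_k=B^j_k$; taking adjoints and using $h^*=h$ gives ${B^j_k}^{*}h={B^j_k}^{*}$. Inserting this into the one-step recursion $\rho^{(n)}_j=\sum_k B^j_k\rho^{(n-1)}_k{B^j_k}^{*}$ obtained from \eqref{2.4} yields
\[
\rho^{(n)}_j h=\sum_k B^j_k\rho^{(n-1)}_k{B^j_k}^{*}h=\sum_k B^j_k\rho^{(n-1)}_k{B^j_k}^{*}=\rho^{(n)}_j ,
\]
for all $n\ge 1$; only right multiplication by the bounded operator $h$ and norm convergence of the series are used, and no assumption on $\rho^{(0)}$ is needed. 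One cannot in general push this down to $n=0$, since the initial state $\rho^{(0)}$ is arbitrary and need not satisfy $\rho^{(0)}_j h=\rho^{(0)}_j$; this is precisely why $n_0=1$ is taken.

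With the relation $\rho^{(n)}_j p(j)=\rho^{(n)}_j h=\rho^{(n)}_j$ established for all $j\in\Lambda$ and all $n\ge n_0=1$, Theorem~\ref{thm4.8} applies directly and shows that the QMC associated with the OQRW is reducible, with reducing projection $p_{[1}$. I do not foresee a genuine obstacle here: the argument is essentially a one-line dynamical observation combined with Theorem~\ref{thm4.8}, and the only mild technical points are the membership $h\otimes I_{\mathcal K}\in\mathcal B$ for infinite $\Lambda$ and the choice $n_0=1$ forced by the arbitrariness of the initial state.
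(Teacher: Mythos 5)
Your proposal is correct and follows essentially the same route as the paper: the paper likewise takes $p=h\otimes I_{\mathcal K}$ with $n_0=1$, verifies $\rho^{(n)}_j h=\rho^{(n)}_j$ for $n\ge 1$ (using the path expansion $\rho^{(n)}_j=\sum_{i_0}\sum_{\pi\in\mathcal P(i_0,j)}B_\pi\rho_{i_0}B_\pi^*$ and $B_\pi^* h=B_\pi^*$, where you equivalently use the one-step recursion and length-one paths), and then invokes Theorem~\ref{thm4.8}. Your additional remarks on the admissibility of $p_{[1}\in\mathcal P_0$ and on why $n=0$ must be excluded are correct refinements but not substantive departures.
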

\begin{proof}
Define a projection $p \in \mathcal B(\mathcal H\otimes \mathcal K)$ by
 $p:=h\otimes I_{\mathcal K}=\sum_{j\in \Lambda}h\otimes|j\rangle\langle j|$ and consider $p_{[1}\in \mathcal P_0$. Then for $n\ge 1$,
\begin{eqnarray*}
 \rho^{(n)}_jh&=&\sum_{i_0\in \Lambda}\sum_{\pi\in\mathcal{P}(i_0,j)}B_\pi\rho_{i_0}B^*_\pi h\\
&=&\sum_{i_0\in \Lambda}\sum_{\pi\in\mathcal{P}(i_0,j)}B_\pi\rho_{i_0}B^*_\pi\\
&=&\rho^{(n)}_j.
\end{eqnarray*}
By Theorem \ref{thm4.8}, the QMC is reducible with a reducing projection $p_{[1}$.
\end{proof}
\begin{rem}\label{rem 4.10}
 (a) The condition $\rho^{(n)}_jp(j)=\rho^{(n)}_j$  for all $j\in \Lambda$ and $n\ge n_0$ in  Proposition \ref{prop4.7} is equivalent to
 $p(j)\rho^{(n)}_jp(j)=\rho^{(n)}_j$  for all $j\in \Lambda$ and $n\ge n_0$, which means that for each $j\in \Lambda$,
 the support of $\rho_j^{(n)}$ is in the range space of $p(j)$ for all $n\ge n_0$.

 (b) By  Theorem \ref{thm4.9}, if the range of $B_j^i$
 for all $i, j$ belongs to the nontrivial
 subspace, that is, $hB_j^i=B_j^i$ for some nontrivial projection $h$, then the QMC is reducible.
\end{rem}
Next we discuss some sufficient conditions for the irreducibility.
\begin{thm}\label{thm:irreducibility}
Suppose that the OQRW is such that $\rho^{(n)}_j/\mathrm{Tr}(\rho^{(n)}_j)$ is a faithful state on $\mathcal B(\mathcal H)$ for all $j\in \Lambda$ and $n=0,1,2,\cdots$. Then the QMC associated with this OQRW is irreducible.
\end{thm}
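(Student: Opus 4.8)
The plan is to argue by contradiction, feeding the hypothesis into the characterization of reducibility provided by Theorem~\ref{thm4.8}. Suppose the QMC associated with the given OQRW were reducible. Then there would exist a nontrivial projection $p_{[n_0}\in\mathcal P_0$, with $p=\sum_{j\in\Lambda}p(j)\otimes|j\rangle\langle j|\in\mathcal B$ and each $p(j)$ a projection on $\mathcal H$, satisfying the reducing identity \eqref{eq:reducing}. By Theorem~\ref{thm4.8} this is equivalent to
\[
\rho^{(n)}_j\,p(j)=\rho^{(n)}_j\qquad\text{for all }j\in\Lambda\text{ and all }n\ge n_0 .
\]
Note first that the faithfulness hypothesis implicitly presupposes $\rho^{(n)}_j\ne 0$ for every $j$ and $n$, so $\Lambda(\rho^{(n)})=\Lambda$ throughout and Theorem~\ref{thm4.8} indeed applies to every $j$.

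Next I would fix $j\in\Lambda$ and specialize the displayed relation to $n=n_0$. By Remark~\ref{rem 4.10}(a) it is equivalent to $p(j)\rho^{(n_0)}_j p(j)=\rho^{(n_0)}_j$, i.e.\ the support projection of $\rho^{(n_0)}_j$ is dominated by $p(j)$. But $\rho^{(n_0)}_j/\mathrm{Tr}(\rho^{(n_0)}_j)$ is faithful on $\mathcal B(\mathcal H)$, hence $\rho^{(n_0)}_j$ has trivial kernel and its support projection equals $I_{\mathcal H}$: concretely, if $\psi\in\ker p(j)$ then $\rho^{(n_0)}_j\psi=\rho^{(n_0)}_j p(j)\psi=0$, which forces $\psi=0$. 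Therefore $p(j)=I_{\mathcal H}$.

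Since $j\in\Lambda$ was arbitrary, this yields $p=\sum_{j\in\Lambda}I_{\mathcal H}\otimes|j\rangle\langle j|=I_{\mathcal H}\otimes I_{\mathcal K}=I_{\mathcal B}$, and hence $p_{[n_0}=I_{\mathcal A}$ is the trivial projection, contradicting the nontriviality of $p_{[n_0}$. Consequently no nontrivial reducing projection in $\mathcal P_0$ exists, and the QMC is irreducible.

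The argument is short once Theorem~\ref{thm4.8} is in hand, so there is no real obstacle; the only points that need a little care are (i) to express the faithfulness of a possibly infinite-dimensional state as ``trivial kernel / support projection $I_{\mathcal H}$'' rather than ``invertibility'', and (ii) to observe that the hypothesis, being assumed for \emph{all} $n$, rules out reducing projections based at every $n_0\in\mathbb Z_+$, not merely at a single time.
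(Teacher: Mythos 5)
Your proof is correct and follows essentially the same route as the paper's: contradiction via Theorem \ref{thm4.8}, reducing the question to $\rho^{(n)}_j p(j)=\rho^{(n)}_j$ and then using faithfulness to force $p(j)=I_{\mathcal H}$. Your spelled-out kernel argument (a projection annihilating no nonzero vector of a faithful density operator must be the identity) is just a more explicit version of the one-line step the paper leaves implicit.
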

\begin{proof}
Suppose, on the contrary, that there is a nontrivial projection $p$ on $\mathcal H$ and $p_{[n_0}\in \mathcal P_0$ is a reducing projection  for the QMC.  Then by Theorem \ref{thm4.8} it follows that $\rho^{(n)}_jp(j)=\rho^{(n)}_j$ for all $j\in \Lambda$ and $n\ge n_0$. Since $\rho^{(n)}_j/\mathrm{Tr}(\rho^{(n)}_j)$ is a faithful state it must hold that $p(j)$ is the identity operator on $\mathcal H$, leading to a contradiction. 
\end{proof}
An example satisfying the conditions in the theorem will be considered in Subsection \ref{subsec:CMC}.
\begin{rem}
The reducibility and irreducibility of positive maps on the ideal of trace class operators (in Schr\"odinger representation), and equivalently, of positive maps on the operator algebras (in Heisenberg representation), was introduced in some literature, see for example, \cite{CP, FP}. Typically, the study of reducibility and irreducibility for OQRWs was investigated in \cite{CP}. It turns out that the concepts of reducibility and irreducibility for OQRWs defined in \cite{CP} and in this paper are equivalent. In the Appendix we will consider the equivalence in detail. Therefore, in particular, under the condition of Theorem \ref{thm4.9}, the OQRW is reducible in the sense of \cite{CP}. Also, under the condition of Theorem \ref{thm:irreducibility}, the OQRW is irreducible in the sense of \cite{CP}.  
\end{rem}

\section{Examples}
\subsection{OQRWs on the 1-dimensional integer lattice}
In this subsection we give some examples of reducible and irreducible OQRWs on the 1-dimensional integer lattice. Of course the idea can be extended to multi-dimensional models. First we consider reducible OQRWs.

\begin{ex}\label{Example 4.6}
Let us consider a stationary OQRW on
$\mathbb{Z}$ with nearest-neighbor jumps (see \cite{APSS}). Let
$\mathcal{H}$ be a Hilbert space and $B, C\in
\mathcal{B}(\mathcal{H})$ such that $B^*B+C^*C=I$. We define the
OQRW as follows: $$B^{i-1}_i=B~\text{and}~B^{i+1}_i=C$$ for all
$i\in\mathbb{Z}$, and $B^i_j=0$ in the other cases. Fix a density operator
$\rho\in \mathcal{B}(\mathcal{H}\otimes\mathcal{K})$, of the form
$$\rho=\sum_i\rho_i\otimes|i\rangle\langle i|$$
with $\rho_i\neq0$ for all $i$. We get
\begin{equation}\label{eq:1-d_oqrw}
\mathcal{M}(\rho)=\sum_j(B\rho_{j+1}B^*+C\rho_{j-1}C^*)\otimes|j\rangle\langle j|.
\end{equation}
In order to specify the model, let us consider the following matrices,
$$B=\left[\begin{matrix}0&0\\\frac{1}{\sqrt{2}}&\frac{1}{\sqrt{2}}\end{matrix}\right],
~~C=\left[\begin{matrix}0&0\\-\frac{1}{\sqrt{2}}&\frac{1}{\sqrt{2}}\end{matrix}\right],
~~h=\left[\begin{matrix}0&0\\0&1\end{matrix}\right],
$$
or
$$B=\left[\begin{matrix}\frac{1}{\sqrt{2}}&0\\-\frac{1}{\sqrt{2}}&0\end{matrix}\right],
~~C=\left[\begin{matrix}0&\frac{1}{\sqrt{2}}\\0&-\frac{1}{\sqrt{2}}\end{matrix}\right],
~~h=\left[\begin{matrix}\frac{1}{2}&-\frac{1}{2}\\-\frac{1}{2}&\frac{1}{2}\end{matrix}\right].
$$
For both cases, $B$ and $C$ satisfy $B^*B+C^*C=I$ and $hB=B,\,hC=C$. By Theorem
\ref{thm4.9}, the QMC corresponding to this OQRW is reducible.
\end{ex}
The following is an example of reducible OQRW in 1 dimension with 3 states.
\begin{ex}\label{Example 4.7}
 Let us consider a stationary OQRW on
$\mathbb{Z}$ with nearest-neighbor jumps. Let $\mathcal{H}$ be a
Hilbert space and $L_i\in \mathcal{B}(\mathcal{H}), i=1,2,3$, satisfy   $\sum_{i=1}^3L_i^*L_i=I$. We define the walk as follows:
$$B^{i-1}_i=L_1,~B^i_i=L_2~\text{and}~B^{i+1}_i=L_3$$ for
all $i\in\mathbb{Z}$, and $B^i_j=0$ for the other cases. The evolution \eqref{eq:1-d_oqrw} becomes now
\begin{equation*}
\mathcal{M}(\rho)=\sum_j(L_1\rho_{j+1}{L_1}^*+L_2\rho_{j}{L_2}^*+L_3\rho_{j-1}{L_3}^*)\otimes|j\rangle\langle
j|.
\end{equation*}
If we take the matrices
$$L_1=\left[\begin{matrix}0&0&0\\0&\frac{1}{\sqrt{2}}&\frac{1}{\sqrt{2}}\\0&0&0\end{matrix}\right],
~L_2=\left[\begin{matrix}0&0&0\\0&\frac{1}{\sqrt{2}}&-\frac{1}{\sqrt{2}}\\0&0&0\end{matrix}\right],
~L_3=\left[\begin{matrix}0&0&0\\0&0&0\\1&0&0\end{matrix}\right],
~h=\left[\begin{matrix}0&0&0\\0&1&0\\0&0&1\end{matrix}\right],
$$
it holds that
$hL_i=L_i,\,i=1,2,3$. Thus by Theorem \ref{thm4.9} again, the QMC is
reducible.
\end{ex}
Next we consider irreducible OQRWs in 1-dimensional space.
\begin{prop}\label{prop:irreducibility_1-dim}
In the 1-dimensional OQRW in \eqref{eq:1-d_oqrw}, suppose that $B$ and $C$ satisfy the following condition:
\begin{equation}\label{eq:irreducibility_sufficiency}
B^*xB=0\text{ and }C^*xC=0 \text{ for nonnegative } x\in \mathcal B(\mathcal H) \text{ implies } x=0.
\end{equation}
Then $\mathcal M(\rho)$ is faithful whenever $\rho$ is faithful. Therefore by Theorem \ref{thm:irreducibility}, the QMC associated with the OQRW \eqref{eq:1-d_oqrw} with a faithful initial state $\rho^{(0)}$ is irreducible.
\end{prop}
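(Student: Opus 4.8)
The plan is to reduce the claim to Theorem \ref{thm:irreducibility} by showing that if $\rho^{(0)}$ is a faithful state of the block-diagonal form, then every component $\rho^{(n)}_j$ of $\rho^{(n)}=\mathcal M^n(\rho^{(0)})$ is faithful (equivalently strictly positive) on $\mathcal B(\mathcal H)$ for every $j\in\mathbb Z$ and every $n\ge 0$. Once that is established, the hypothesis of Theorem \ref{thm:irreducibility} holds and irreducibility of the QMC follows immediately.

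\textbf{Step 1: reduce to a one-step statement.} By induction on $n$, it suffices to prove the stated implication: if $\rho=\sum_i\rho_i\otimes|i\rangle\langle i|$ is faithful (i.e., each $\rho_i$ is strictly positive), then $\mathcal M(\rho)=\sum_j(B\rho_{j+1}B^*+C\rho_{j-1}C^*)\otimes|j\rangle\langle j|$ is faithful, i.e., $B\rho_{j+1}B^*+C\rho_{j-1}C^*>0$ for every $j$. The base case $n=0$ is the hypothesis, and the inductive step is exactly this one-step claim applied to $\rho^{(n)}$.

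\textbf{Step 2: prove the one-step claim via the contrapositive of \eqref{eq:irreducibility_sufficiency}.} Fix $j$ and suppose $x\in\mathcal B(\mathcal H)$, $x\ge 0$, satisfies $x\big(B\rho_{j+1}B^*+C\rho_{j-1}C^*\big)=0$; since both summands are nonnegative this forces $x B\rho_{j+1}B^* x=0$ and $x C\rho_{j-1}C^* x=0$, hence $\rho_{j+1}^{1/2}B^*x=0$ and $\rho_{j-1}^{1/2}C^*x=0$. Because $\rho_{j\pm 1}$ are strictly positive, $\rho_{j\pm1}^{1/2}$ are invertible, so $B^*x=0$ and $C^*x=0$, and therefore $xBB^*x=0=xCC^*x$. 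Now apply condition \eqref{eq:irreducibility_sufficiency} with the nonnegative operator $x$: we need $B^*xB=0$ and $C^*xC=0$. This is not literally what we have — we have $B^*x=0$, which gives $B^*xB=0$ directly (and similarly $C^*xC=0$) since $B^*x=0\implies B^*xB=0$. Hence \eqref{eq:irreducibility_sufficiency} yields $x=0$, proving that $B\rho_{j+1}B^*+C\rho_{j-1}C^*$ has trivial kernel, i.e., is strictly positive. (One should phrase this carefully using that a nonnegative operator $y$ is faithful iff $xy=0,\ x\ge0$ implies $x=0$, or equivalently $\ker y=\{0\}$; for trace-class $y$ this is just strict positivity.)

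\textbf{Step 3: conclude.} Combining Steps 1 and 2, $\rho^{(n)}_j$ is strictly positive for all $n\ge0$ and all $j$, so $\rho^{(n)}_j/\mathrm{Tr}(\rho^{(n)}_j)$ is a faithful state on $\mathcal B(\mathcal H)$; by Theorem \ref{thm:irreducibility} the QMC associated with \eqref{eq:1-d_oqrw} is irreducible. \textbf{The main point requiring care} is the passage from $x\big(B\rho_{j+1}B^*+C\rho_{j-1}C^*\big)=0$ to $B^*x=C^*x=0$: one must justify splitting the kernel condition across the two nonnegative summands (using $x\ge0$ and $\langle\psi,x y_1 x\psi\rangle+\langle\psi,xy_2x\psi\rangle=0$ with both terms $\ge0$) and then invoking invertibility of $\rho_{j\pm1}^{1/2}$, which is where faithfulness of $\rho$ is genuinely used; everything else is a direct application of the hypothesis \eqref{eq:irreducibility_sufficiency} and Theorem \ref{thm:irreducibility}.
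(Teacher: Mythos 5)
Your proof is correct and follows essentially the same route as the paper's: both reduce to the one-step claim that $\mathcal M$ preserves faithfulness, split $B\rho_{j+1}B^*+C\rho_{j-1}C^*$ into its two nonnegative summands, use faithfulness of $\rho_{j\pm1}$ to force $B^*xB=C^*xC=0$, and then invoke \eqref{eq:irreducibility_sufficiency}; the paper merely phrases the faithfulness step through $\mathrm{Tr}(\rho_{j\pm1}B^*xB)=0$ rather than through kernels and square roots. The only cosmetic caveat is that for infinite-dimensional $\mathcal H$ the trace-class operator $\rho_{j\pm1}^{1/2}$ is injective but never invertible, and injectivity is all your step from $\rho_{j\pm1}^{1/2}B^*x=0$ to $B^*x=0$ actually requires.
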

\begin{proof}
Let $\rho=\sum_{i\in \mathbb{Z}}\rho_i\otimes|i\rangle\langle i|$ be a faithful state. This means that $\rho_i$'s are faithful for all $i\in \mathbb Z$. We have to show that
\[
\mathcal M(\rho)_i=B\rho_{i+1}B^*+C\rho_{i-1}C^*
\]
is faithful for each $i\in \mathbb Z$. So, let $x\in \mathcal B(\mathcal H)$ be a nonnegative operator (matrix) and suppose that
\[
\Tr(\mathcal M(\rho)_i x)=\Tr((B\rho_{i+1}B^*+C\rho_{i-1}C^*)x)=0.
\]
Since $\rho_{i+1}$ as well as $\rho_{i-1}$ are faithful, it implies that
$B^*xB=0\text{ and }C^*xC=0$. By the condition \eqref{eq:irreducibility_sufficiency} we get $x=0$. The proof is completed.
\end{proof}
The simplest example for which the condition \eqref{eq:irreducibility_sufficiency} holds is the case where $B$ or $C$ is invertible. In the following example, the invertibility of $B$ or $C$ is not needed.
\begin{ex}
Let $U=\left[\begin{matrix}{\bf u}&{\bf v}\end{matrix}\right]$ be a $2\times 2$ unitary matrix with column vectors $\bf u$ and $\bf v$.
Let
\[
B=\left[\begin{matrix}{\bf u}&{0}\end{matrix}\right] \text{ and }C=\left[\begin{matrix}{0}&{\bf v}\end{matrix}\right].
\]
Then we get
\[
B^*xB=\left[\begin{matrix}\langle{\bf u},x{\bf u}\rangle&{0}\\0&0\end{matrix}\right] \text{ and }C^*xC=\left[\begin{matrix}0&{0}\\0&\langle{\bf v},x{\bf v}\rangle\end{matrix}\right].
\]
Thus the condition \eqref{eq:irreducibility_sufficiency} is satisfied. By Proposition \ref{prop:irreducibility_1-dim}, if the OQRW has faithful initial state $\rho^{(0)}$, the associated QMC is irreducible.
\end{ex}

\subsection{Classical Markov chains}\label{subsec:CMC}
In this subsection we consider the classical Markov chains. The recovery of the classical Markov chains from the OQRWs was introduced in \cite{APSS}.
Let $\mathcal H=\mathbb C$ and $\mathcal K=l^2(\Lambda)$. Then $\mathcal H\otimes \mathcal K\approx l^2(\Lambda)$. Let $P=(P(i,j))_{i,j\in \Lambda}$ be a stochastic matrix, i.e., all the components are nonnegative and satisfy
\[
\sum_{j\in \Lambda}P(i,j)=1 \text{ for all }i\in \Lambda.
\]
For each $i,j\in \Lambda$, let $U_j^i$ be a unitary operator on $\mathcal H=\mathbb C$. (Thus $U_j^i$ is a complex number with modulus $1$, and in the sequel, it turns out that there is no difference with the choice $U_j^i\equiv 1$.) Define
\[
B_j^i:=\sqrt{P(j,i)}U_j^i,\quad i,j\in \Lambda.
\]
We see that
\[
\sum_i{B_j^i}^*{B_j^i}=I,\quad j\in \Lambda.
\]
We notice that since $\mathcal H=\mathbb C$ is a one-dimensional space, the algebra $\mathcal B$ consisting of the operators $x=\sum_j x_j\otimes |j\rangle\langle j|$, with  $(x_j)$ a bounded sequence in $\mathbb C$, is a commutative algebra. If $\rho=(\rho_i)_{i\in \Lambda}$ is a state, i.e., a probability measure on $\Lambda$, we denote by $P_{\rho}$ the projection onto the support of $\rho$. Here the support of $\rho$ is the set of $i\in \Lambda$ at which $\rho_i>0$. By a direct computation from \eqref{4.3} we get
\begin{equation}\label{eq:commutative_transition_expectation}
\mathcal E^{(n)}(x\otimes y)=P_{\rho^{(n)}}xPy,
\end{equation}
where
\[
(Py)_j=\sum_i P(j,i)y_i.
\] 
Notice that,  in the classical Markov chain, if $\rho^{(0)}$ is the initial state (a probability measure) then $\rho^{(1)}=\rho^{(0)}P$, i.e.,
\[
\rho^{(1)}_i=\sum_{j\in \Lambda}\rho^{(0)}_jP(j,i),
\]
and
\[
\rho^{(n)}=\rho^{(0)}P^n.
\]
\begin{prop}\label{prop:state_evolution}
For any initial state $\rho^{(0)}$, the $n$th evolution of the open quantum random walk, $\mathcal M^n(\rho^{(0)})$, is $\rho^{(0)}P^n$. Therefore, the evolutions by classical Markov chain and by open quantum random walk are the same.
\end{prop}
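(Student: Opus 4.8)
The plan is to reduce the whole statement to the elementary observation that, in this commutative setting (the fiber $\mathcal H=\mathbb C$ being one-dimensional), the completely positive map $\mathcal M$ acts on the diagonal coefficients exactly as right multiplication by the stochastic matrix $P$; the proposition then follows by a one-line induction on $n$.

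First I would recall from \eqref{2.4} that for a block-diagonal state $\rho=\sum_i\rho_i\otimes|i\rangle\langle i|$ one has $\mathcal M(\rho)_i=\sum_j B_j^i\rho_j{B_j^i}^*$. The key point is that here $\mathcal H=\mathbb C$, so each $\rho_j$ is a nonnegative real number, and each $U_j^i$ is a complex number of modulus one; hence
\[
B_j^i\rho_j{B_j^i}^*=\sqrt{P(j,i)}\,U_j^i\,\rho_j\,\overline{U_j^i}\,\sqrt{P(j,i)}=P(j,i)\,\rho_j .
\]
Therefore $\mathcal M(\rho)_i=\sum_j\rho_j P(j,i)=(\rho P)_i$, i.e. $\mathcal M(\rho)=\rho P$ once we identify $\mathcal H\otimes\mathcal K\approx l^2(\Lambda)$ and view a state as a probability vector $\rho=(\rho_i)_{i\in\Lambda}$. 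In passing, this computation also makes precise the earlier remark that the choice of phases $U_j^i$ is immaterial, and one may take $U_j^i\equiv 1$.

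Second, I would run the induction. The base case $n=1$ is the identity $\mathcal M(\rho^{(0)})=\rho^{(0)}P$ just established. For the inductive step, assuming $\mathcal M^{n-1}(\rho^{(0)})=\rho^{(0)}P^{n-1}$ — which is again a probability vector, hence a block-diagonal state of the form to which \eqref{2.4} applies, the class of block-diagonal states being preserved under $\mathcal M$ as recalled following \cite{APSS} — apply the previous step with $\rho=\rho^{(0)}P^{n-1}$ to get $\mathcal M^n(\rho^{(0)})=\mathcal M(\rho^{(0)}P^{n-1})=(\rho^{(0)}P^{n-1})P=\rho^{(0)}P^n$. Since $\rho^{(0)}P^n$ is by definition the $n$-th step of the classical Markov chain with transition matrix $P$ and initial distribution $\rho^{(0)}$, this establishes that the two evolutions coincide.

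There is no genuine obstacle here: the statement is essentially a bookkeeping verification. The only points deserving a word of care are the preservation of the block-diagonal form under $\mathcal M$ (so the induction stays within the class of states for which \eqref{2.4} is valid) and the commutativity of $\mathcal B(\mathbb C)$, which is exactly what lets the scalars $\rho_j$ and $U_j^i$ pass through one another and collapses the quantum dynamics onto the classical one.
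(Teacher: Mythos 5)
Your proposal is correct and follows essentially the same route as the paper: compute $\mathcal M(\rho)_i=\sum_j B_j^i\rho_j{B_j^i}^*=\sum_j \rho_j P(j,i)$ using that $\mathcal H=\mathbb C$ makes the $\rho_j$ scalars and the phases $U_j^i$ cancel, then conclude by induction on $n$. The only difference is that you spell out the one-dimensionality argument and the preservation of block-diagonal form more explicitly than the paper does, which is harmless.
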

\begin{proof}
By induction, it is enough to see $\mathcal M(\rho^{(0)})$.
\begin{eqnarray*}
(\mathcal M(\rho^{(0)}))_i&=&\sum_{j}B_j^i\rho^{(0)}_j{B_j^i}^*\\
&=&\sum_j\rho^{(0)}_jP(j,i)\\
&=&(\rho^{(0)}P)_{i}.
\end{eqnarray*}
The proof is complete.
\end{proof}
Applying the formula \eqref{eq:commutative_transition_expectation} repeatedly we get
\begin{eqnarray}\label{eq:state}
\rho(I\otimes  \cdots \otimes I\otimes \overset{n\mathrm{th}}{x}\otimes I\otimes \cdots)&=&\rho^{(0)}(E_{0]}(I\otimes   \cdots \otimes I\otimes x\otimes I\otimes \cdots))\nonumber\\
&=&  \rho^{(0)}(\mathcal E^{(0)}(I\otimes\mathcal E^{(1)}(I\otimes\cdots \mathcal E^{(n)}(x\otimes I))))\nonumber\\
&=&\rho^{(0)}(P_{\rho^{0}}PP_{\rho^{(1)}}P\cdots P_{\rho^{(n-1)}}PP_{\rho^{(n)}}x)\nonumber\\
&=&\mathbb E_{\rho^{(0)}}[P^nx]=\mathbb E_{\rho^{(n)}}[x].
\end{eqnarray}
The transition expectation thus recovers the classical Markov chain, which was observed in \cite{AK1, AK2}.

Recall that in the classical Markov chain with transition matrix $P$, we say that a state $j$ is accessible from $i$, written $i\to j$, if $P^n(i,j)>0$ for some $n\in \mathbb N$. We  say that $i$ communicates with $j$, written $i\leftrightarrow j$, if $i\to j$ and $j\to i$. The relation $"\leftrightarrow"$ is an equivalence relation. In the case when every states communicate with every other states, we say that the chain is irreducible. Otherwise, it is called reducible \cite{S}. We want to see the reducibility or irreducibility of classical Markov chains also from the view point of quantum Markov chains.  We emphasize here that, by definition, when we discuss the reducibility or irreducibility of QMCs, not only the transition expectations but also the initial states are concerned. 
\begin{prop}\label{prop:reducible}
Suppose a classical Markov chain with transition matrix $P$ is reducible. Then the QMC $(\rho^{(0)},(\mathcal E^{(n)})_{n\ge 0})$ with a suitably chosen initial state (measure) $\rho^{(0)}$ and transition expectations $\mathcal E^{(n)}$ given by \eqref{eq:commutative_transition_expectation} is reducible.
\end{prop}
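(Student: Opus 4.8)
The plan is first to translate the statement into the commutative language of this subsection, and then to reduce it to an elementary fact about reducible stochastic matrices. Since $\mathcal H=\mathbb C$, the only projections on $\mathcal H$ are $0$ and $1$, so a projection $p=\sum_j p(j)\otimes|j\rangle\langle j|$ of the form \eqref{eq:projections_for_oqrw} is exactly the indicator $\mathbf 1_S$ of a subset $S\subseteq\Lambda$, and it is nontrivial precisely when $\emptyset\neq S\neq\Lambda$. Because $\rho^{(n)}_j\ge 0$ and $p(j)\in\{0,1\}$, the criterion supplied by Theorem \ref{thm4.8} --- that $\rho^{(n)}_jp(j)=\rho^{(n)}_j$ for all $j$ and all $n\ge n_0$ --- says exactly that $\mathrm{supp}(\rho^{(n)})\subseteq S$ for every $n\ge n_0$. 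So the task reduces to producing a nontrivial set $S$ and an initial measure $\rho^{(0)}$ for which the orbit $\rho^{(n)}=\rho^{(0)}P^n$ (Proposition \ref{prop:state_evolution}) is eventually supported in $S$.

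Next I would manufacture $S$ from the reducibility hypothesis. Fix a state $i_0\in\Lambda$ and set $S(i_0):=\{j\in\Lambda: P^n(i_0,j)>0 \text{ for some }n\ge 1\}$, the set of states reachable from $i_0$ in at least one step; one checks at once that $S(i_0)$ is nonempty (stochasticity of $P$ forces $P(i_0,j)>0$ for some $j$) and forward-closed ($j\in S(i_0)$ and $P(j,k)>0$ imply $k\in S(i_0)$ by concatenating paths). Now if $S(i_0)=\Lambda$ for every choice of $i_0$, then every state would be accessible from every other, i.e.\ $P$ would be irreducible; so by hypothesis there is some $i_0$ with $S:=S(i_0)\subsetneq\Lambda$. (Taking the range $n\ge 1$ rather than $n\ge 0$ here is what spares me from having to decide whether $i_0$ is accessible from itself.)

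Finally I would put $\rho^{(0)}:=\delta_{i_0}$, the point mass at $i_0$; this is a legitimate initial measure (the construction allows $\rho^{(0)}_j=0$ for some $j$), and I take for $\mathcal E^{(n)}$ the transition expectations determined by $\rho^{(n)}=\rho^{(0)}P^n$ through \eqref{eq:commutative_transition_expectation}. By Proposition \ref{prop:state_evolution}, $\rho^{(n)}=\delta_{i_0}P^n$, so $\mathrm{supp}(\rho^{(n)})=\{j:P^n(i_0,j)>0\}\subseteq S$ for every $n\ge 1$, immediately from the definition of $S$. Taking $p:=\mathbf 1_S$, a nontrivial projection in $\mathcal B$ since $\emptyset\neq S\subsetneq\Lambda$, and $n_0:=1$, Theorem \ref{thm4.8} then gives that $p_{[1}$ is a reducing projection for the QMC $\left(\rho^{(0)},(\mathcal E^{(n)})_{n\ge 0}\right)$, so this QMC is reducible.

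The only step carrying real content is the middle one --- extracting a proper, forward-closed set of states from reducibility --- and even that is little more than unwinding the definition of (ir)reducibility of $P$; the rest is bookkeeping between the measure-theoretic and operator-algebraic descriptions. I do not anticipate a genuine obstacle.
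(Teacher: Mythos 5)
Your proof is correct, and it reaches the paper's conclusion by a genuinely more economical route. Both arguments reduce the problem, via Theorem \ref{thm4.8} and the observation that in the commutative case ($\mathcal H=\mathbb C$) a block-diagonal projection is just the indicator of a subset $S\subseteq\Lambda$, to exhibiting a nonempty proper forward-closed set $S$ and an initial measure whose orbit $\rho^{(n)}=\rho^{(0)}P^n$ is eventually supported in $S$. The paper produces such an $S$ by a two-case analysis on the decomposition $\Lambda=T\cup(\cup_k R_k)$ into transient states and closed recurrent communicating classes: if some closed recurrent class $R_1$ exists, it is proper and serves as $S$ with $\rho^{(0)}$ supported in it; if not, all states are transient, and the paper escapes from a non-closed communicating class to a state $j$ and takes for $S$ the set of states accessible from $j$. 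Your single argument --- choose $i_0$ with $S(i_0)=\{j: P^n(i_0,j)>0 \text{ for some } n\ge 1\}$ proper, which exists precisely because $P$ is not irreducible, start from $\delta_{i_0}$, and exploit the freedom $n_0=1$ in Theorem \ref{thm4.8} to dodge the question of whether $i_0\in S(i_0)$ --- subsumes both of the paper's cases (the paper's second case is essentially your construction applied to one particular starting state). The paper's version yields a slightly more structural reducing set (a closed recurrent class, when one exists); yours is shorter and does not rely on the transient/recurrent classification at all.
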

\begin{proof}
The state space $\Lambda$ of the Markov chain is decomposed as $\Lambda=T\cup(\cup_{k}R_k)$, where $T$ is the set of transient states and $R_k$'s are closed, recurrent communicating classes. If there is a closed, recurrent communicating class, say $R_1$, by the hypothesis of the proposition, it holds that $R_1\neq \Lambda$.  
Let $p:=P_{R_1}$ be the projection onto the set $R_1$, i.e., $P_{R_1}$ is the indicator function $1_{R_1}$ looked as a multiplication operator on $l^2(\Lambda)$, and we consider $p_{[0}=p\otimes p\otimes\cdots$.  Let $\rho^{(0)}$ be a state (measure) supported on $R_1$. Since $R_1$ is a closed communicating class, $\rho^{(n)}$ is also supported on $R_1$ for all $n\ge 1$. Now the condition $\rho^{(n)}_jp(j)=\rho^{(n)}_j$ is equivalent to saying that $p(j)=1$ on the support of $\rho^{(n)}$, and this is the case by our construction. Therefore by Theorem \ref{thm4.8} the QMC is reducible. If there is no closed, recurrent communicating class, then the set $\Lambda$ consists only of transient states. Fix an $i_0\in \Lambda$ and let $C_0$ be the communicating class containing $i_0$. By the assumption $C_0$ is not closed, i.e., there is a state $j\in \Lambda\setminus C_0$ such that $i_1\to j$ for some $i_1\in C_0$ and $j\nrightarrow i$ for all $i\in C_0$. Let $C_1:=\{j'\in \Lambda:j\rightarrow j'\}$. Then $C_1\cap C_0=\emptyset$ and if the initial measure $\rho^{(0)}$ is supported on the set $C_1$, it follows that $\rho^{(n)}$ is also supported on the set $C_1$ for all $n\ge 1$. Defining now $p:=P_{C_1}$, the projection onto the set $C_1$, we see as above that $p_{[0}=p\otimes p\otimes\cdots$ is a reducing projection for the QMC $(\rho^{(0)},(\mathcal E^{(n)})_{n\ge 0})$.
\end{proof}
Let us now consider the converse problem.
\begin{prop}\label{prop:irreducible}
Suppose that the classical Markov chain with transition matrix $P$ is irreducible. Then the QMC of transition expectation \eqref{eq:commutative_transition_expectation} with any faithful initial state is irreducible.
\end{prop}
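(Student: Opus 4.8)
The plan is to argue by contradiction and reduce the assertion to a purely combinatorial fact about irreducible stochastic matrices, combined with Theorem~\ref{thm4.8}. Suppose the QMC $(\rho^{(0)},(\mathcal E^{(n)})_{n\ge 0})$ built from a faithful initial measure $\rho^{(0)}$ were reducible. By the description of the admissible reducing projections in \eqref{eq:projections_for_oqrw}, there would exist $n_0\in \mathbb Z_+$ and a nontrivial projection $p=\sum_{j}p(j)\otimes|j\rangle\langle j|\in \mathcal B$ satisfying \eqref{eq:reducing}. The key simplification is that $\mathcal H=\mathbb C$ here, so each $p(j)$ is a projection on a one-dimensional space and hence $p(j)\in\{0,1\}$; equivalently $p$ is the multiplication operator $1_S$ associated to the subset $S:=\{j\in\Lambda:p(j)=1\}$, and ``$p$ nontrivial'' means precisely $\emptyset\neq S\subsetneq \Lambda$ (in particular we may assume $|\Lambda|\ge 2$, since for $|\Lambda|=1$ no nontrivial projection exists in $\mathcal P_0$ and the statement is vacuous).

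Next I would translate the reducibility condition through Theorem~\ref{thm4.8}: $p_{[n_0}$ reduces the QMC if and only if $\rho^{(n)}_j p(j)=\rho^{(n)}_j$ for all $j\in\Lambda$ and all $n\ge n_0$. Since each $\rho^{(n)}_j\ge 0$ is a scalar and $p(j)\in\{0,1\}$, this says exactly that $\rho^{(n)}_j=0$ for every $j\notin S$ and every $n\ge n_0$, i.e. $\mathrm{supp}(\rho^{(n)})\subseteq S$ for all $n\ge n_0$.

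The heart of the argument is to show this forces $S=\Lambda$ when $P$ is irreducible and $\rho^{(0)}$ is faithful. Since $\rho^{(0)}_i>0$ for every $i$ and $\rho^{(n)}=\rho^{(0)}P^n$ by Proposition~\ref{prop:state_evolution}, we have $j\in\mathrm{supp}(\rho^{(n)})$ as soon as $P^n(i,j)>0$ for some $i$; in fact $\rho^{(n)}_j\ge \rho^{(0)}_i\,P^n(i,j)$. Fix an arbitrary $j\in\Lambda$. Choosing any $k\neq j$, irreducibility gives $m_1,m_2\ge 1$ with $P^{m_1}(j,k)>0$ and $P^{m_2}(k,j)>0$, hence $P^{m}(j,j)>0$ for $m:=m_1+m_2$, and therefore $P^{\ell m}(j,j)>0$ for all $\ell\ge 1$. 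Picking $\ell$ with $\ell m\ge n_0$ yields $\rho^{(\ell m)}_j\ge \rho^{(0)}_j\,P^{\ell m}(j,j)>0$, so $j\in\mathrm{supp}(\rho^{(\ell m)})\subseteq S$. Since $j$ was arbitrary, $S=\Lambda$, contradicting nontriviality; hence the QMC is irreducible.

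I do not anticipate a serious obstacle; the only genuinely substantive point is the first one, namely that one-dimensionality of $\mathcal H$ collapses the quantum reducing projections to coordinatewise scalars, so that quantum reducibility of the QMC becomes the statement that $\mathrm{supp}(\rho^{(n)})$ remains confined to a proper subset of $\Lambda$ for all large $n$ — which irreducibility of $P$ rules out because every state is revisited along the chain at arbitrarily late times. (Alternatively, one can check that an irreducible $P$ has no zero column in any power $P^n$, so that $\rho^{(n)}_j>0$ for all $n,j$, and then invoke Theorem~\ref{thm:irreducibility} directly; the argument above is the more economical route.)
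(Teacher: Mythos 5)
Your proof is correct. It is worth noting, though, that it takes a slightly different route from the paper's. The paper argues directly: it asserts that irreducibility of $P$ together with faithfulness of $\rho^{(0)}$ forces $\rho^{(n)}=\rho^{(0)}P^n$ to have full support for \emph{every} $n$, and then invokes Theorem~\ref{thm:irreducibility}. That assertion is true but is exactly the point needing justification (one must check that no column of any power $P^n$ vanishes identically, which follows from the existence of return times $P^{\ell m}(j,j)>0$ as in your argument); the paper states it in one line. You instead argue by contradiction through Theorem~\ref{thm4.8}, observing that one-dimensionality of $\mathcal H$ collapses a reducing projection to an indicator $1_S$ with $\emptyset\neq S\subsetneq\Lambda$, and that the reducing condition forces $\mathrm{supp}(\rho^{(n)})\subseteq S$ for all $n\ge n_0$. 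This only requires the weaker fact that each $j$ carries mass at \emph{arbitrarily late} times, which you get immediately from $\rho^{(\ell m)}_j\ge\rho^{(0)}_j P^{\ell m}(j,j)>0$; so your version is marginally more economical and fills in the step the paper glosses over. Your parenthetical alternative at the end is precisely the paper's proof. Both routes ultimately rest on Theorem~\ref{thm4.8}, and your handling of the degenerate case $|\Lambda|=1$ and of the reduction $p(j)\in\{0,1\}$ is sound.
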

\begin{proof}
Suppose that the transition expectation \eqref{eq:commutative_transition_expectation} is constructed from a faithful initial state $\rho^{(0)}$. From the assumption of irreducibility of the classical Markov chain, the distribution at any time has full support. This implies by Proposition \ref{prop:state_evolution} that the state $\rho^{(n)}$ is faithful for any $n\ge 0$.  The result now follows from Theorem \ref{thm:irreducibility}.
\end{proof}

\appendix
\section{Equivalence of concepts of reducibility/irreducibility of OQRWs defined in \cite{CP} and in this paper}\label{sec:comparison}

First of all we recall the definition of reducibility/irreducibility used in \cite{CP}. Let $\Phi$ be a positive map on the ideal $\mathcal I_1(\mathfrak h)$ of trace class operators on a Hilbert space $\mathfrak h$. When we come to our model, $\mathfrak h$ is $\mathcal H\otimes \mathcal K$ and $\Phi$ is $\mathcal M$. $\Phi$ is said to be irreducible (see \cite[Definition 3.1]{CP}) if the only orthogonal projections $p$ reducing $\Phi$, i.e. such that $\Phi(p\mathcal I_1(\mathfrak h)p)\subset p\mathcal I_1(\mathfrak h)p$, are $p=0$ and $I$. Applying to OQRWs, Carbone and Pautrat have shown (terminology in our language):
\begin{prop} (\cite[Proposition 3.8]{CP})\label{prop:CP_condition}
The completely positive and trace preserving map $\mathcal M$ is irreducible if and only if for any $i,j\in \Lambda$ and any $\psi,\xi\in \mathcal H\setminus\{0\}$, there is a path $\pi\in \mathcal P(i,j)$ such that $\langle \xi,B_\pi\psi\rangle\neq 0$.
\end{prop}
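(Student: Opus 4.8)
The plan is to rewrite the enclosure definition of reducibility from \cite[Definition 3.1]{CP} as a concrete invariance condition on subspaces of the one-site space $\mathcal H$ under the operators $B^i_j$, and then to observe that this condition is exactly the negation of the path criterion in the statement.

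The first, slightly technical, ingredient is a general remark about completely positive maps in Kraus form. If $\Phi(\cdot)=\sum_k L_k(\cdot)L_k^*$ and $p$ is an orthogonal projection, then $\Phi\big(p\,\mathcal I_1\,p\big)\subseteq p\,\mathcal I_1\,p$ holds if and only if $\operatorname{ran}(p)$ is invariant under every $L_k$. Indeed, for $v\in\operatorname{ran}(p)$ the positive operator $\Phi(|v\rangle\langle v|)=\sum_k|L_kv\rangle\langle L_kv|$ has support $\overline{\operatorname{span}}\{L_kv:k\}$; a positive trace-class operator lies in $p\,\mathcal I_1\,p$ precisely when its support is contained in $\operatorname{ran}(p)$; and every element of $p\,\mathcal I_1\,p$ is a trace-norm limit of combinations of such rank-one pieces, so trace-norm continuity of $\Phi$ transfers the conclusion. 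Applying this to $\Phi=\mathcal M$ with $L_k=M^i_j=B^i_j\otimes|i\rangle\langle j|$, and using $M^i_j\big(u\otimes|k\rangle\big)=\delta_{jk}\,(B^i_ju)\otimes|i\rangle$, I obtain: a projection $p$ reduces $\mathcal M$ iff $\operatorname{ran}(p)$ is invariant under all $M^i_j$; in particular this forces the reducing projections to be block-diagonal, the fact tacitly used in \cite{CP}. The relevant geometric object is, for $\psi\in\mathcal H\setminus\{0\}$ and $i_0\in\Lambda$,
\[
\mathcal V(\psi,i_0):=\bigoplus_{i\in\Lambda}\overline{\operatorname{span}}\big\{B_\pi\psi:\pi\in\mathcal P(i_0,i)\big\}.
\]
Extending a path by one edge and using boundedness of the $B^i_j$ to pass to closures shows that $\mathcal V(\psi,i_0)$ is invariant under all $M^i_j$; it is nonzero because $\sum_i\|B^i_{i_0}\psi\|^2=\|\psi\|^2>0$ by \eqref{2.1}; and its component at a site $j$ is exactly $\overline{\operatorname{span}}\{B_\pi\psi:\pi\in\mathcal P(i_0,j)\}$.

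Given these preparations the two implications are quick. Suppose the path criterion holds and let $p\ne0$ reduce $\mathcal M$; pick $v=\sum_k\phi_k\otimes|k\rangle\in\operatorname{ran}(p)\setminus\{0\}$ with $\phi_{i_0}\ne0$ for some $i_0\in\Lambda$. Repeatedly applying $M^i_j$-invariance of $\operatorname{ran}(p)$ to $v$ gives $B_\pi\phi_{i_0}\otimes|i\rangle\in\operatorname{ran}(p)$ for every $i\in\Lambda$ and every $\pi\in\mathcal P(i_0,i)$, hence $\mathcal V(\phi_{i_0},i_0)\subseteq\operatorname{ran}(p)$. But by the path criterion no nonzero $\xi$ is orthogonal to $\{B_\pi\phi_{i_0}:\pi\in\mathcal P(i_0,i)\}$, so every component of $\mathcal V(\phi_{i_0},i_0)$ equals $\mathcal H$; thus $\operatorname{ran}(p)=\mathcal H\otimes\mathcal K$ and $p=I$, proving $\mathcal M$ irreducible. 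Conversely, if the path criterion fails there are $i_0,j_0\in\Lambda$ and $\psi_0,\xi_0\in\mathcal H\setminus\{0\}$ with $\langle\xi_0,B_\pi\psi_0\rangle=0$ for all $\pi\in\mathcal P(i_0,j_0)$; the projection onto $\mathcal V(\psi_0,i_0)$ then reduces $\mathcal M$, is nonzero, and is not the identity because the nonzero vector $\xi_0\otimes|j_0\rangle$ is orthogonal to $\mathcal V(\psi_0,i_0)$ — so $\mathcal M$ is reducible.

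The one place that needs care is the first step, namely making rigorous the passage between the abstract formulation $\Phi(p\,\mathcal I_1\,p)\subseteq p\,\mathcal I_1\,p$ of \cite{CP} and the concrete invariance $M^i_j\operatorname{ran}(p)\subseteq\operatorname{ran}(p)$, including the (automatic) block-diagonality of reducing projections; everything after that is bookkeeping with paths and closures. I would also point out that, combined with Theorem \ref{thm4.8}, this proposition is precisely what lets one identify the reducibility notion of \cite{CP} with the QMC reducibility of the present paper.
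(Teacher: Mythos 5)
This proposition is quoted in the paper directly from \cite[Proposition 3.8]{CP} and is given no proof there, so there is nothing internal to compare against; what you have written is a self-contained proof of the cited result, and it is correct. Your route is the natural one: reduce the enclosure condition $\Phi(p\,\mathcal I_1\,p)\subseteq p\,\mathcal I_1\,p$ to invariance of $\operatorname{ran}(p)$ under the Kraus operators $M^i_j$ (the support argument for $\Phi(|v\rangle\langle v|)$ and the trace-norm density/continuity argument are both sound, using \eqref{2.2} for summability), then translate invariance under $M^i_j=B^i_j\otimes|i\rangle\langle j|$ into the path criterion via the subspace $\mathcal V(\psi,i_0)$; both implications are complete, including the checks that $\mathcal V(\psi_0,i_0)\neq 0$ (via \eqref{2.1}) and that $\xi_0\otimes|j_0\rangle$ witnesses $\mathcal V(\psi_0,i_0)\neq\mathcal H\otimes\mathcal K$. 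One aside is overstated: invariance of $\operatorname{ran}(p)$ under all $M^i_j$ does \emph{not} force $p$ to be block-diagonal (e.g.\ with $\mathcal H=\mathbb C^2$, $\Lambda=\{1,2\}$ and all $B^i_j$ of the form $|e_2\rangle\langle e_1|$ or $|e_2\rangle\langle e_2|$, the span of $e_1\otimes(|1\rangle+|2\rangle)$, $e_2\otimes|1\rangle$, $e_2\otimes|2\rangle$ is invariant but not block-diagonal); what \cite[Proposition 6.1]{CP} asserts is only that a reducible $\mathcal M$ admits \emph{some} block-diagonal reducing projection. Since your main argument never uses block-diagonality --- you apply the $M^i_j$ directly to an arbitrary vector of $\operatorname{ran}(p)$ --- this does not affect the validity of the proof, but the parenthetical claim should be weakened or dropped.
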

Now we show the definitions of reducibility/irreducibility of OQRWs given in \cite{CP} and in the present paper are equivalent. First we remark that as given by \cite[Proposition 6.1, item 3]{CP}, once an OQRW is reducible (in the sense of \cite{CP}) one can always find a reducing projection $p$ of the block-diagonal form: $p=\sum_j p(j)\otimes |j\rangle\langle j|$. Conversely speaking, if there is no nontrivial block-diagonal reducing projection the OQRW is irreducible. Suppose the OQRW is reducible in the sense of \cite{CP} with a reducing projection $p=\sum_j p(j)\otimes |j\rangle\langle j|$. By \cite[Proposition 6.2]{CP}, it holds that for any $i,j\in \Lambda$, 
\begin{equation}\label{eq:CP_condition}
B_j^ip(j)=p(i)B_j^ip(j).
\end{equation}
Take an initial state $\rho^{(0)}=\sum_j\rho^{(0)}_j\otimes |j\rangle \langle j|$ such that $p(j)\rho^{(0)}_jp(j)=\rho^{(0)}_j$ for all $j\in \Lambda$. We can show by induction that for all $n\ge 0$ and $j\in \Lambda$,
\begin{equation}\label{eq:DKY_condition} 
p(j)\rho^{(n)}_jp(j)=\rho^{(n)}_j. 
\end{equation}
In fact, suppose \eqref{eq:DKY_condition} holds for $n=0,\cdots,k$. Then, by the assumption hypothesis and \eqref{eq:CP_condition}
\begin{eqnarray*}
p(j)\rho^{(k+1)}_jp(j)&=&\sum_ip(j)B_i^j\rho^{(k)}_i{B_i^j}^*p(j)\\
&=&\sum_ip(j)B_i^jp(i)\rho^{(k)}_ip(i){B_i^j}^*p(j)\\
&=&\sum_iB_i^jp(i)\rho^{(k)}_ip(i){B_i^j}^*\\
&=&\sum_iB_i^j\rho^{(k)}_i{B_i^j}^*=\rho^{(k+1)}_j.
\end{eqnarray*}
Now \eqref{eq:DKY_condition} holds and by Theorem \ref{thm4.8} the OQRW is reducible in the sense of this paper (recall \eqref{eq:DKY_condition} is equivalent to $\rho^{(n)}_jp(j)=\rho^{(n)}_j$).

Conversely, suppose that the OQRW is reducible in the sense of present paper. By Theorem \ref{thm4.8}, there is a nontrivial projection $p=\sum_j p(j)\otimes |j\rangle\langle j|$ such that \eqref{eq:DKY_condition} holds for $n\ge n_0$ for some $n_0$. Find a $j\in \Lambda$ such that $p(j)\neq I_\mathcal H$. By the assumption we have for any $k\ge 0$,
\[
\mathrm{Tr}(\rho^{(n_0+k)}_jp(j)^\perp)=\mathrm{Tr}(\rho^{(n_0+k)}_jp(j)p(j)^\perp)=0.
\]
Take an $i\in \Lambda$ such that $\rho^{(n_0)}_i\neq 0$. From the above relation we have 
\begin{eqnarray*}
0&=&\mathrm{Tr}(\rho^{(n_0+k)}_jp(j)^\perp)\\
&=&\sum_{i_0,\cdots,i_{k-1}}\mathrm{Tr}\left(B_{\pi(i_0,\cdots,i_{k-1},j)}\rho^{(n_0)}_{i_0}B^*_{\pi(i_0,\cdots,i_{k-1},j)}p(j)^\perp\right)\\
&\ge &\mathrm{Tr}\left(B_{\pi }\rho^{(n_0)}_{i}B^*_{\pi }p(j)^\perp\right)=\mathrm{Tr}\left(\rho^{(n_0)}_{i}B^*_{\pi }p(j)^\perp B_{\pi }\right)\ge 0,
\end{eqnarray*}
for any path $\pi\in \mathcal P(i,j)$ of length $k$. Thus for any  $0\neq \psi\in \mathcal H$ lying in the spectral projection of $\rho^{(n_0)}_i$ away from zero, e.g., any eigenvector of $\rho^{(n_0)}_i$ corresponding to nonzero eigenvalue,
\[
\langle \psi,B^*_{\pi }p(j)^\perp B_{\pi }\psi\rangle=0.
\]
Therefore, for any such a vector $0\neq \psi$     and $0\neq \xi\in p(j)^\perp$, and for any path $\pi\in \mathcal P(i,j)$,
\begin{eqnarray*}
|\langle \xi,B_\pi \psi\rangle|&=&|\langle \xi,p(j)^\perp B_\pi \psi\rangle|\\
&\le &\|\xi\|\langle p(j)^\perp B_\pi \psi,p(j)^\perp B_\pi \psi\rangle^{1/2}=0.
\end{eqnarray*}
By Proposition \ref{prop:CP_condition}, it says that the OQRW is reducible in the sense of \cite{CP}. This completes the proof of equivalence.\\[1ex]

\noindent\textbf{Acknowledgments}. We are grateful to anonymous referees for giving many valuable comments. It improved the paper a lot. A. Dhahri acknowledges support by Basic Science Research Program through the National Research Foundation of Korea (NRF) funded by the Ministry of Education (grant 2016R1C1B1010008). The research by H. J. Yoo was supported by Basic Science Research Program through the National
Research Foundation of Korea (NRF) funded by the Ministry of Education (NRF-2016R1D1A1B03936006).

\end{document}